\providecommand{\U}[1]{\protect\rule{.1in}{.1in}}
\newcommand{\R}{\mathbb{R}}
\newcommand{\E}{\mathbb{E}\thinspace }
\newcommand{\esup}{\text{ess}\sup}
\newtheorem{theorem}{Theorem}
\newtheorem{assumption}[theorem]{Assumption}
\newtheorem{corollary}[theorem]{Corollary}
\newtheorem{definition}[theorem]{Definition}
\newtheorem{example}[theorem]{Example}
\newtheorem{lemma}[theorem]{Lemma}
\newtheorem{remark}[theorem]{Remark}
\begin{document}

\title{Predictable Forward Performance Processes:\\
The Binomial Case
\thanks{%
This work was presented at the SIAM\ conferences in Financial Mathematics in
Chicago and Austin, Second Paris-Asia Conference in Quantitative Finance in Suzhou, Workshop on Stochastic Control and Related Issues in Osaka, and seminars at Oxford University, University of Michigan, Ann Arbor, and IMA, University of Minnesota. The authors would like to
thank the participants for fruitful comments and suggestions, as well as the Oxford-Man Institute of Quantitative Finance for its support and hospitality. We also thank the two anonymous referees for constructive comments that have led to an improved version of the paper. Zhou gratefully acknowledges financial support through a start-up grant at Columbia University and through the Oxford--Nie Financial Big Data Lab.}}
\author{Bahman Angoshtari\thanks{%
Department of Applied Mathematics, University of Washington.
Email: \texttt{bahmang@uw.edu}. } \and Thaleia Zariphopoulou\thanks{
Departments of Mathematics and IROM, The University of Texas at Austin,
Austin and the Oxford-Man Institute, University of Oxford, Oxford, UK.
Email: \texttt{zariphop@math.utexas.edu.} } \and Xun Yu Zhou\thanks{
Department of IEOR, Columbia University, New York, NY 10027. Email: \texttt{%
xz2574@columbia.edu.} }}
\maketitle

\begin{abstract}
We introduce a new class of forward performance processes that are
endogenous and predictable with regards to an underlying market information
set and, furthermore, are updated at discrete times. 
We analyze
in detail a binomial model whose parameters are random and updated dynamically
as the market evolves. We show that the key step in the construction of the
associated predictable forward performance process is to solve a
single-period {\it inverse} investment problem, namely, to determine,
period-by-period and conditionally on the current market information, the
end-time utility function from a given initial-time value function. We reduce this inverse problem to solving a functional equation and establish conditions for the existence and uniqueness
of its solutions in the class of inverse marginal functions.
\vspace{1ex}

\noindent \emph{Keywords}: Portfolio selection, forward performance
processes, binomial model, inverse investment problem, functional equation,
predictability.
\end{abstract}

\section{Introduction}

The classical portfolio selection paradigm is based on three fundamental
ingredients: a given investment horizon, $[0,T]$, a performance function
(such as a utility or a risk-return trade-off), $U_{T}(\cdot )$, applied
at the {\it end} of the horizon, and a market model which yields the random
investment opportunities available over $\left[ 0,T\right) .$ This triplet is
exogenously and entirely specified at initial time, $t=0.$

Once these ingredients are chosen, one then solves for
the optimal strategy $\pi ^{\ast }(\cdot )$, and derives the value function $%
U_{0}(\cdot )$ at $t=0$ as the expectation of the terminal utility of
optimal wealth. The value function thus stipulates the best possible
performance value achievable from each and every amount of initial wealth
and, hence, it can be in turn considered as a performance criterion at $t=0$
 that is \emph{consistent} with the terminal performance criterion $U_{T}(\cdot )$.
  Here, $U_{T}(\cdot )$ is exogenous, and
$\pi ^{\ast }(\cdot )$ and
$U_{0}(\cdot )$ are endogenous. The model therefore entails a
\textit{backward} approach in time, from $U_{T}(\cdot )$ to $U_{0}(\cdot )$.
This is also in accordance with the celebrated Dynamic Programming
Principle (DPP) or, otherwise, known as Bellman's principle of optimality.

Despite its classical mathematical foundations and theoretical appeal, this
approach nonetheless has several shortcomings.
Firstly, it relies heavily on the model selection for the {\it entire} investment
horizon, which is not practical, especially if the horizon is long.
%
The second difficulty is the pre-commitment, at the initial time, to a terminal
utility. Indeed, it is clearly difficult to assess and specify the performance
function when the investment horizon is sufficiently long. Moreover, a performance
criterion naturally depends on time and state (either state of nature or state of
the agent's circumstances). It is more plausible that one
knows the utility or the resulting preferred allocations for now or the
immediate future,  and then preserves them under certain consistency
criteria (see, for example, the old note of Fischer Black, \cite{Black1988}).
Thirdly, it is very seldom the case that an optimal investment problem
``terminates'' at a single horizon $T$ or whether $T$ is a priori known when the investment activity is firstly set.

The above considerations have led to the development of the so-called \textit{%
forward performance measurement}, initially proposed by
\cite{MZ2006} and later extended by the same authors in a series of papers (see \cite{MZ2009,MZ2010, Musiela-Z-SIFIN, MZ2011}) and by
others (see, for example, \cite{ElKarouiMard2013}, and \cite{NT2015}) in continuous-time market settings. The main idea of
the forward approach is that instead of fixing, as in the classical setting,
an investment horizon, a market model and a terminal utility, one starts with an initial  performance measurement and updates it \textit{forward
in time} as the market and other underlying stochastic factors evolve. The
evolution of the forward process is dictated by a forward-in-time version of
the DPP and, thus, it ensures time-consistency across all different times.


Most of the existing results on forward performance measurement have so far focused exclusively on
continuous-time, It\^o-diffusion settings, in which both trading and
performance valuation are carried out continuously in time. It was shown in
\cite{MZ2010} that the forward process is associated with an
ill-posed infinite-dimensional stochastic partial differential equation (SPDE), the
same way that the classical value function satisfies (in Markovian models) the finite-dimensional
Hamilton-Jacobi-Bellman equation (HJB). This performance SPDE has been
subsequently studied in \cite{ElKarouiMard2013}, \cite{NZ2014}, \cite{NT2015}
and, more recently, in \cite{shkolnikovSZ2015} for asset price factors
evolving at different time scales. Despite the technical challenges that
this forward SPDE presents (ill-posedness, high or infinite dimensionality,
degeneracies, and volatility specification), the continuous-time cases are
tractable because stochastic calculus can be employed and infinitesimal
arguments can be, in turn, developed.

However, the continuous-time setting has a major drawback in that it is hard
to see how exactly the performance criterion evolves from one instant  to
the next. This evolution is lost at the infinitesimal level and hidden behind
the (generally intractable) stochastic PDE.

The aim of this paper is to introduce and study
forward investment performance processes that are \textit{discrete} in
time, while trading can be either discrete or continuous in time. We will develop
an iterative mechanism through which an investor updates/predicts her performance criterion
at the next investment period,
based on both her current performance
and her assessment
of the upcoming market dynamics in the next period.
This predictability will be present in an explicit and transparent manner.

In addition to the conceptual motivation described above, there are also
practical considerations in studying the discrete-time predictable forward
performance. Indeed, in investment practice, trading occurs at discrete
times and not continuously. More importantly, typically, performance criteria are
directly or indirectly determined by individuals, such as higher-level
managers or by clients, and not by the portfolio manager. These
\textquotedblleft performance evaluators\textquotedblright \ use information
sets that are different, both in terms of content and updating frequency,
from the ones used by the portfolio manager.
Moreover, even if trading can
occur at extremely high frequencies (hence almost close to continuous
trading), performance assessment/update takes place at a much slower pace, e.g., a
senior manager will not keep track of the performance of a portfolio or
update the performance criterion as frequently as the subordinate portfolio
manager in charge of that portfolio.

In this paper, we will consider a (possibly indefinite) series of time points, $0=t_{0},$ $t_{1}, \dots,$ $t_{n},$ $\dots$, at which the performance measurement is
evaluated and updated. The (short) period between any given two neighboring
points will be called an \emph{evaluation period}. We define our  forward performance
processes in a completely analogous way to the continuous-time counterparts.
However, we choose to work with processes that are \emph{predictable} with regards to the information at
the most recent evaluation time. We elaborate on this requirement later on.

%
%

To highlight the key ideas of predictable forward performance processes, we
start our analysis with a simple, yet still rich enough setting. The market
consists of two securities, a riskless asset and a stock whose price evolves
according to a binomial model at times $0=t_{0},t_{1},.,t_{n},\dots$, at which
the forward performance evaluation also occurs. The market model is more
general than the standard binomial tree, in that the asset returns and their
probabilities are
estimated/determined only one period ahead.
Such
a setting allows for \emph{``real-time''} dynamic updating of the underlying parameters, as the market evolves from one period
to the next.

The definition of a discrete-time predictable forward performance process (see Definition \ref{def:PFU})
dictates that in each evaluation period $\left[ t_{n},t_{n+1}\right)$,
the initial performance function $U_{n}(\cdot)$ is nothing else than the value function
of an expected utility maximization problem in this period with $U_{n+1}(\cdot)$ being
the terminal utility function.
Therefore,
in generating a predictable forward performance process,
we need to solve, in each period, an investment
problem where the value function is given and the terminal utility function
is to be found. This problem, which we term a \textit{single-period inverse investment }%
problem, then needs to be solved sequentially ``period-by-period,'' conditionally on the dynamically updated information at the
beginning of this period. It turns out that the key to solving this
problem is a linear functional equation, which relates the inverse marginal
processes at the beginning and the end of each evaluation period. We analyze this equation in detail,
and establish conditions for existence and uniqueness of the solutions in the
class of inverse marginal functions.

Once such a single-period inverse investment is solved, then starting from $[0,t_1)$ and
proceeding iteratively
\textit{forward in time}, a predictable performance process is constructed
together with the optimal allocations and their wealth processes.\footnote{%
In this paper we assume that both the updating and trading take place at the same time. As
discussed above, this does not have to be the case. However, we choose to
study this parsimonious model in order to highlight the significance of
updating the performance measurement in discrete times, without getting into too much
technicality.}


The paper is structured as follows. In Section \ref{sec:def}, we introduce
the notion of predictable forward performance processes in a general market
setting. We then formulate a binomial model with random, dynamically updated
parameters, in Section \ref{sec:Binomial}. In Section \ref{sec:Problem}, we
apply the definition of predictable forward performance processes to the
binomial model, and show that their construction reduces to solving an
inverse investment problem. In Section \ref{sec:IMP}, this inverse problem
is shown to be equivalent to solving a functional
equation. We derive sufficient existence and uniqueness conditions as well as the
explicit solution to the functional equation in Section \ref{sec:FunEq}.
Finally, we present the general construction algorithm in Section \ref{sec:Algo}, and conclude in Section \ref{sec:conclusion}. Proofs of the main results are relegated to an Appendix.

\section{Predictable forward performance processes: A general definition}

\label{sec:def}

In this section, we introduce the concept of discrete-time predictable
forward performance processes in a general market model. Starting from the next section, we will
restrict the market setting to a binomial model with random, dynamically updated,
parameters, and provide a detailed discussion on the existence and construction of
such performance processes.

The investment paradigm is cast in a probability space $(\Omega ,\mathcal{F},%
\mathbb{P})$ augmented with a filtration $(\mathcal{F}_{t})$, $t\geq 0$. We
denote by $\mathcal{X}(t,x)$ the set of all admissible wealth
processes $X_{s}$, $s\geq t$, starting with $X_{t}=x$ and such that $X_{s}$
is $\mathcal{F}_{s}$-measurable. The term \textquotedblleft
admissible\textquotedblright \  \ is for now generic and will be specified once
a specific market model is introduced in the sequel.

We call a function $U:\mathbb{R}^{+}\rightarrow \mathbb{R}^{+}$ a \textit{\
utility (or performance) function} if $U\in C^{2}(\mathbb{R}^{+})$, $%
U^{\prime }>0$, $U^{\prime \prime }<0$, and satisfies the Inada conditions, $%
\lim_{x\to 0^+}U^{\prime }\left( x\right) =\infty $ and$\, \
\lim_{x\to \infty }U^{\prime }\left( x\right) =0$.


For any $\sigma $-algebra $\mathcal{G}\subseteq \mathcal{F}$, the set of
\emph{$\mathcal{G}$-measurable utility (or performance) functions} is
defined as
\begin{equation*}
\begin{split}
\mathcal{U}(\mathcal{G})& =\left \{ U:\mathbb{R}^{+}\times \Omega
\rightarrow \mathbb{R}\, \left \vert U(x,\cdot )\text{ is }\mathcal{G}\text{
-measurable for each }x\in \mathbb{R}^{+},\right. \right. \\
& \left. \text{and }U\left( \cdot ,\omega \right) \text{ is a utility
function a.s.}\right \} .
\end{split}%
\end{equation*}%
In other words, the elements of $\mathcal{U}\left( \mathcal{G}\right) $ are
entirely known (predicted) based on $\mathcal{G}$, as they are predictable with regards to the information contained in $\mathcal{G}
$. Alternatively, we may think of $U\in \mathcal{U}\left( \mathcal{G}\right)
$ as a deterministic utility function, given the information in $\mathcal{G}.$

Next, we define the discrete predictable forward performance processes.
To ease the notation, we skip the $\omega $-argument throughout.

\begin{definition}
\label{def:PFU} Let discrete time points $0=t_{0}<t_{1}<\cdots <t_{n}<\cdots
$ be given.
A family of random functions $\{U_{0},U_{1},U_{2},\cdots \}$ is a
predictable forward performance process with respect to $\left( \mathcal{F}%
_{t}\right) $ if, for $X_{n}=X_{t_{n}}$ and $\mathcal{F}_{n}=\mathcal{F}%
_{t_{n}}$, $n=0,1,2,\dots,$ the following conditions hold:

\begin{enumerate}
\item[(i)] $U_{0}$ is a deterministic utility function and $U_{n}\in
\mathcal{U}(\mathcal{F}_{n-1})$.

\item[(ii)] For any initial wealth $x> 0$ and any admissible wealth process $%
X=\{X_n\}_{n=0}^\infty\in \mathcal{X}(0,x)$,
\begin{equation*}
U_{n-1}(X_{n-1})\geq E_{\mathbb{P}}\left[ \left. U_{n}(X_{n})\right \vert
\mathcal{F}_{n-1}\right] .
\end{equation*}

\item[(iii)] For any initial wealth $x> 0$, there exists an admissible
wealth process $X^{\ast }=\{X_n^\ast\}_{n=0}^\infty\in \mathcal{X}(0,x)$ such that
\begin{equation*}
U_{n-1}\left( X_{n-1}^{\ast }\right) =E_{\mathbb{P}}\left[ \left.
U_{n}\left( X_{n}^{\ast }\right) \right \vert \mathcal{F}_{n-1}\right] .
\end{equation*}
\end{enumerate}
\end{definition}

\bigskip

%


This definition is analogous to its continuous-time counterpart (see
\cite{MZ2009}), except condition (i).
This condition is superfluous in a continuous-time model, but fundamental in a discrete-time one.
It explicitly requires that the performance function at the \textit{next}
upcoming assessment time is \textit{entirely determined} from the
information up to the \textit{present} time (hence the name ``predictable forward").

On the other hand, as in the continuous-time case, properties (ii)-(iii) draw from Bellman's
principle of optimality, which stipulates that the processes $U_{n}(X_{n})$
and $U_{n}(X_{n}^{\ast }),$ $n=0,1,\dots,$ are, respectively, a
supermartingale and a martingale with respect to the filtration $\left(
\mathcal{F}_{n}\right) .$ Since the Bellman principle underlines time-consistency, properties (ii)-(iii) directly ensure that the investment
problem is time-consistent under the predictable forward performance
criterion.

Hence, the above performance measurement is essentially \textit{%
endogenized} by the time-consistency requirements (ii)-(iii).\footnote{Note that the predictability of risk preferences is implicitly present in the
\emph{classical} expected utility in finite horizon settings, say $\left[ 0,T\right]
,$ in which trading is continuous and a deterministic utility for a single horizon  $T$ is pre-chosen at initial time $%
t_{0}=0,$ and it is thus $\mathcal{F}_{0}$-measurable. A fundamental difference,
however, is that the terminal utility function in the classical theory is
exogenous, instead of endogenous.}


Definition \ref{def:PFU} already suggests a general scheme for constructing
predictable forward performance functions in discrete times. Indeed, starting
from an initial datum $U_{0},$ given at time $t_{0}=0$, the entire family $%
U_{1},\dots,U_{n},\dots,$ can be obtained by determining $U_{n}$ from $U_{n-1}$
iteratively, $n=1,2,\dots$, in the way described below.

Properties (ii)--(iii) dictate that, for each trading period $\left[
t_{n-1},t_{n}\right] $, we have
\begin{equation}
U_{n-1}\left( X_{n-1}^{\ast }\right) =
\underset{X_{n}\in \mathcal{X}%
(t_{n-1},X_{n-1}^{\ast })}{ \esup  }
E_{\mathbb{P}}\left[ \left.
U_{n}(X_{n})\right \vert \mathcal{F}_{n-1}\right] .  \label{eq:ForwardDPP}
\end{equation}

At instant $t_{n-1}$, since $\mathcal{F}_{n-1}$ is realized, the random
functions $U_{n-1}$ and $U_{n}$ are both deterministic and so is $%
X_{n-1}^{\ast }$. This, in turn, suggests that we should consider the
following \textquotedblleft \emph{single-period}\textquotedblright \
investment problem (conditional on $\mathcal{F}_{n-1}$):
\begin{equation}
U_{{n-1}}(x)=
\underset{X_{n}\in \mathcal{X}_{n-1,n}(x)}{ \esup  }
E_{%
\mathbb{P}}\left[ U_{n}(X_{n})\middle|
\mathcal{F}_{n-1}
\right],
\label{reverseMerton}
\end{equation}%
for $x>0,$ where, with a slight abuse of notation, we use $\mathcal{X}%
_{n-1,n}(x)$ to denote the set of admissible wealths at $t_{n}$ starting at $%
t_{n-1}$ with wealth $x$.

Therefore, if we are able to determine, for each $n=1,2,\dots,$ a performance
function $U_{n}$ $\in \mathcal{U}\left( \mathcal{F}_{n-1}\right) ,$ such
that the pair $\left( U_{n-1},U_{n}\right) $ satisfies (\ref{reverseMerton}%
), then we will have an iterative scheme to construct the \textit{entire}
predictable forward performance process, starting from $U_{0}$.

One
readily recognizes that (\ref{reverseMerton}) would be the classical
expected utility problem if the objective were to derive $U_{{n-1}}$ from $%
U_{n},$ with $U_{n}$ being a deterministic utility function. Therefore, what
we consider now is an \textit{inverse} investment problem in that
we are given its initial value function and we seek a terminal utility that
is consistent with the latter, with both of these functions being deterministic
(conditionally on $\mathcal{F}_{n-1}$).

We make the following very important observation. Definition \ref{def:PFU} of
the predictable forward criterion might at first indicate that we need to
choose the full model at $t_{0}=0,$ in that we need to completely specify
both the levels of the stock return process and the related probabilities
for \textit{all} future times $t_{1}, t_2,\dots$. As mentioned earlier,
this is a very stringent requirement in the traditional framework. However,
this is \textit{not }the case in the forward setting.

Indeed, as the analysis will show in the binomial model we analyze herein, in order to construct the predictable
forward criterion and the associated optimal portfolios and wealths, we
\textit{only }need to know at the beginning of each period, say $\left[
t_{n-1},t_{n}\right) ,$ the transition probabilities, $p_{n},$ and the
values $u_{n},d_{n}$ of the return $R_{n}.$ In other words, we only need to
specify at $t_{n-1}$ the \textit{single-step }model input $\left(
p_{n},u_{n},d_{n}\right) .$ This triplet is thus  $\mathcal{F}_{n-1}-$
measurable and, as such, it captures accurately and in ``real-time'' the
evolution of the market in $\left( 0,t_{n-1}\right] .$ There is no need to
specify at $t_{n-1}$ any model input beyond $\left(
p_{n},u_{n},d_{n}\right) .$

We also remark that, herein, we are not concerned with the specific mechanism that yields the ``real-time'' updated 
model input (e.g. $\left( p_{n},u_{n},d_{n}\right)$ for the binomial setting) at $\mathcal{F}_{t-1}.$ It
may be the outcome of a dynamic sequential learning procedure or it may be
provided exogenously from a specialist, etc. The crucial point is that
there is no requirement that it is a priori modeled for the entire optimization period.

To the best of our knowledge, such inverse discrete-time problems have not been
considered in the literature.
We start in this paper with the
binomial case in which the parameters - both the transition probabilities
and price levels - are not known a priori but are updated period-by-period as the market
moves.
As we will see, while the binomial
case is one of the simplest discrete-time market models, its analysis is sufficiently rich and its results reveal the key economic insights regarding the
predictable forward performance criteria.


\section{A binomial market model with random, dynamically updated parameters%
}

\label{sec:Binomial}

We consider a market with two traded assets, a riskless bond and a stock.
The bond is taken to be the numeraire and assumed, without loss of
generality, to offer zero interest rate.\footnote{If the  bond price follows
a predictable stochastic process, the analysis herein is valid as long as one works appropriately in discounted units.}
The stock price at times $t_{0},t_{1},\dots$,
evolves according to a binomial model that we now specify. {\ }

{Let $R_{n}$ be the total return of the stock over period $\left[
t_{n-1},t_{n}\right) $. Here, $R_n$ is a random variable with two values $%
u_n>d_n.$
We assume that $R_{n}$, $u_n$, and $d_n$, $n=1,2,\dots$%
, are all random variables in a measurable space $(\Omega ,\mathcal{F})$ augmented
with a filtration }$\left( {\mathcal{F}_{n}}\right) ,$ $n=1,2,\dots,${\ with }${%
\mathcal{F}_{n}}$ {representing the information available at $t_{n}$.
Moreover, we assume that $R_{n}$ is $\mathcal{F}_{n}$-measurable and that its values, $u_n$, and $d_n$, are $\mathcal{F}_{n-1}$-measurable. In other words,
the high and low return levels for each investment period are known at the
beginning of this period, while the realized return is known at its end.

The historical measure $\mathbb{P}\thinspace $ is a
probability measure on $(\Omega,\mathcal{F})$ and the following
standard no-arbitrage conditions are satisfied. As mentioned earlier,
the specific values of the transition probabilities, say $p_{1},p_{2,}...,$
are not a priori specified at $t_{0}=0.$ Rather, it is assumed that they are
provided at the beginning of the corresponding trading period, namely, in
the trading period $\left[ t_{n-1},t_{n}\right) ,$ $p_{n}$ is
provided at $t_{n-1}$ and as such it is $\mathcal{F}_{n-1}$-measurable. The
only standing assumption (see (\textit{ii) }below) is that these
probabilities satisfy the natural no arbitrage conditions.

\begin{assumption}
\label{assump:Historical} For all $n=1,2,\dots$:

\hspace{0.9ex}(i) $0<d_n<1<u_n$, $\mathbb{P}\thinspace $-a.s.,

(ii) The transition probabilities $p_{n}$
satisfy $0<p_{n}<1.$
\end{assumption}

The investor trades between the stock and the bond using self-financing
strategies. She starts at $t_{0}=0$ with total wealth $x>0$ and rebalances
her portfolio at times $t_{n}$, $n=1,2,\dots$. At the beginning of each
period, say $[t_{n},t_{n+1})$, she chooses the amount $\pi _{n+1}$ to be
invested in the stock (and the rest in the bond) for this period. In turn,
her wealth process, denoted by $X_{n}^{\pi },$ $n=1,2,\dots$, evolves according
to the wealth equation
\begin{equation*}
X_{n+1}^{\pi }=X_{n}^{\pi }+\pi _{n+1}(R_{n+1}-1),
\end{equation*}%
with $X_{0}=x$.

The investor is allowed to short the stock but her wealth can never become
negative; thus, $\pi _{n+1}$ must satisfy
\begin{equation}
-\frac{X_{n}^{\pi }}{u_{n+1}-1}\leq \pi _{n+1}\leq \frac{X_{n}^{\pi }}{%
1-d_{n+1}};\quad n=1,2,\dots  \label{eq:NoBankruptcy_multi}
\end{equation}

We call an investment strategy $\pi =\{ \pi _{n}\}_{n=1}^{\infty }$ \emph{%
admissible} if it is self-financing, $\pi_{n}$ is $\mathcal{F}_{n-1}$-measurable,
and \eqref{eq:NoBankruptcy_multi} is satisfied $\mathbb{P}\thinspace $%
-a.s.. A wealth process $X=\{X_{n}^{\pi }\}_{n=0}^{\infty }$ is
then admissible if the strategy $\pi $ that generates it is admissible.

We recall that $\mathcal{X}(n,x)$ is the set of admissible wealth processes $%
\{X_{m}\}_{m=n}^{\infty }$, starting with $X_{n}=x$.

We also introduce the auxiliary ``single-step'' set of admissible portfolios $%
\pi _{n+1},$ chosen at $t_{n}$ for the trading period $\left[
t_{n},t_{n+1}\right)$ and assuming wealth $x$ at $t_{n},$ by
\begin{equation*}
\mathcal{A}_{n,n+1}(x)=\left \{ \pi _{n+1}:\pi _{n+1}\text{ is $\mathcal{F}%
_{n}$-measurable, }-\frac{x}{u_{n+1}-1}\leq \pi _{n+1}\leq \frac{x}{%
1-d_{n+1}},\text{ }x>0\right \} ,
\end{equation*}%
as well as the corresponding set of admissible wealth processes
\begin{equation*}
\mathcal{X}_{n,n+1}(x)=\left \{ x+\pi _{n+1}R_{n+1}:\pi _{n+1}\in \mathcal{A}%
_{n,n+1}(x),\text{ }x>0\right \} .
\end{equation*}

\begin{remark}
	Our problem formulation and results can be readily generalized for multi-asset, complete markets. In the interest of readability of the paper, however, we opt to keep the current single-asset model.
\end{remark}

\section{Problem statement and reduction to the single-period inverse
investment problem}

\label{sec:Problem}

In this section, we consider predictable forward performance processes in the binomial model,
and show that their construction reduces to
solving a series of single-period inverse investment problems.

The investor starts with an initial utility $U_{0}$ and
updates her performance criteria at times $%
t_{1},t_{2},\dots$, with the associated performance functions $U_{1},U_{2},\dots$
satisfying Definition \ref{def:PFU}.

We now present the procedure that yields the
construction of a predictable forward performance process starting from $%
U_{0}$, and determining $U_{n}$ from $U_{n-1},$ iteratively for $n=1,2,\dots
\;$.

At $t_{0}=0$, equation (\ref{eq:ForwardDPP}) becomes
\begin{equation}
U_{0}(x)=
\underset{X_{1}\in \mathcal{X}(0,x)}{ \esup  }
E\,_{\mathbb{P}%
}\left[ U_{1}(X_{1})\Big|\mathcal{F}_{0}\right] =\underset{\pi _{1}\in
\mathcal{A}_{0,1}(x)}{\sup }E\,_{\mathbb{P}}\left[ U_{1}\Big(x+\pi
_{1}(R_{1}-1)\Big)\right] ;\quad x>0.  \label{eq:IMP_n1}
\end{equation}%
Since the market parameters $(u_1,d_1,p_{1})$ and the initial
datum $U_{0}$ are known at $t_0$, finding a deterministic $\left( \mathcal{F}_{0}%
\text{-measurable}\right) $ $U_{1}$ reduces to the single-period inverse
investment problem discussed in Section \ref{sec:def}. Let us for the moment
assume that we are able to solve this inverse problem to obtain $U_{1}$.

At $t=t_{1}$, the investor observes the realization of the stock return $%
R_{1}$ and estimates the parameters $(u_2,d_2,p_{2})$ for
the second trading period $\left[ t_{1},t_{2}\right) $. Setting $n=2$ in (%
\ref{eq:ForwardDPP}) then yields
\begin{equation}
U_{1}\left( X_{1}^{\ast }\left( x\right) \right) =
\underset{X_{2}\in
\mathcal{X}(1,X_{1}^{\ast }\left( x\right) )}{\esup}
E_{\mathbb{P}}%
\left[ \left. U_{2}(X_{2})\right \vert \mathcal{F}_{1}\right] ,
\label{eq:DPP_n2}
\end{equation}%
where $X_{1}^{\ast }\left( x\right) $ is the optimal wealth generated at $%
t_{1}$, starting at $x$ at $t_{0}=0$, from the previous period.

It follows from the classical expected utility theory
 (see also Theorem \ref{thm:U1} below) that $X_{1}^{\ast }\left(
x\right) =I_{1}(\rho _{1}U_{0}^{\prime }(x)),$ $x>0,$ where $%
I_{1}=(U_{1}^{\prime })^{-1}$ and $\rho _{1}$ is the pricing kernel over the
period $\left[ 0,t_{1}\right) $, given by
\begin{equation*}
\rho _{1}=\frac{1-d_1}{p_{1}(u_1-d_1)}\, \mathbf{1}_{\{
R_{1}=u_1\}}+\frac{u_1-1}{(1-p_{1})(u_1-d_1)}\mathbf{1}_{\{
R_{1}=d_1\}}.
\end{equation*}
The mapping $x\rightarrow X_{1}^{\ast }(x)$ is strictly increasing for each $%
x>0$ and of full range, since $I_{1}$ and $U_{0}^{\prime }$ are both
strictly decreasing functions, $\rho _{1}>0$, and the Inada conditions yield
$X_{1}^{\ast }(0)=0$ and $X_{1}^{\ast }(\infty )=\infty $.

Since $X_{1}^{\ast }(x)$ is $\mathcal{F}_{1}$-measurable and the parameters $(u_2,d_2,p_{2})$
together with $U_{1}$ are all known at $t=t_{1}$, we deduce that \eqref{eq:DPP_n2}
reduces, with a slight abuse of notation, to finding $U_{2}\left( \cdot
\right) \in \mathcal{U}\left( \mathcal{F}_{1}\right) $ such that
\begin{equation*}
U_{1}(x)=
\underset{\pi _{2}\in \mathcal{A}_{1,2}(x)}{ \esup  }
E_{%
\mathbb{P}}\left[ \left. U_{2}\left( x+\pi _{2}(R_{2}-1)\right) \right \vert
\mathcal{F}_{1}\right];\;\;x>0,
\end{equation*}%
with $U_{1}$ given. In other words, one needs to solve yet another single-period inverse
investment problem that is mathematically identical to \eqref{eq:IMP_n1}.

At $t=t_{n}$, in exactly the same manner as above, we have to solve
\begin{equation*}
U_{n}(x)=
\underset{\pi _{n+1}\in \mathcal{A}_{n,n+1}(x)}{\esup}
E_{%
\mathbb{P}}\left[ \left. U_{n+1}\left( x+\pi _{n+1}(R_{n+1}-1)\right) \right
\vert \mathcal{F}_{n}\right] ;\quad   x>0,
\end{equation*}%
thereby deriving $U_{n+1}$ from $U_{n},$ with $U_{n+1}\in \mathcal{U}\left(
\mathcal{F}_{n+1}\right)$ and with the parameters $(u_n,d_n,p_{n})$ known at $t_n$.

Thus, all the terms of a predictable forward performance process can be
obtained, starting from any arbitrary initial wealth $x>0$ and proceeding
iteratively solving a \textquotedblleft period-by-period\textquotedblright \
inverse optimization problem. Moreover, as we show in the next section, we
also concurrently derive the optimal portfolio and  wealth processes.

To summarize, the crucial step in the entire predictable forward
construction is to solve this \textit{single-period inverse investment problem%
}. We do this in the next section.

\section{The single-period inverse investment problem}

\label{sec:IMP}

We focus on the analysis of the inverse investment problem (\ref{eq:IMP_n1}%
). To ease the presentation, we introduce a simplified notation. We set $%
t_{0}=0,t_{1}=1$ and $R_{1}=R$ taking values $u$ and $d,$ $u>1$ and $0<d<1$,
with probability $0<p<1$ and $1-p,$ respectively. We recall the risk neutral
probabilities
\begin{equation*}
q=\frac{1-d}{u-d}\text{ \  \  \ and \ \ }1-q=\frac{u-1}{u-d},
\end{equation*}%
and the pricing kernel
\begin{equation}
\rho _{1}=\rho ^{u}\mathbf{1}_{\{R=u\}}+\rho ^{d}\mathbf{1}_{\{R=d\}}:=\frac{%
q}{p}\mathbf{1}_{\{R=u\}}+\frac{1-q}{1-p}\mathbf{1}_{\{R=d\}}.
\label{StochasticFactor1p}
\end{equation}

The investor starts with wealth $X_{0}=x>0,$ and invests the amount $\pi $
in the stock. Her wealth at $t=1$ is then given by the random variable $%
X=x+\pi (R-1)$. The no-bankruptcy constraint (\ref{eq:NoBankruptcy_multi})
becomes $\underline{\pi }(x)\leq \pi \leq \overline{\pi }(x)$, with
\begin{equation*}
\underline{\pi }(x)=-\frac{x}{u-1}<0\quad \text{and}\quad \overline{\pi }(x)=%
\frac{x}{1-d}>0.
\end{equation*}

We denote the set of admissible portfolios as
\begin{equation*}
\mathcal{A}(x)=\{ \pi \in \mathbb{R},\text{ and }\underline{\pi }(x)\leq \pi
\leq \overline{\pi }(x),\text{ }x>0\}.
\end{equation*}%
Given an initial utility function $U_{0}$, we then seek a deterministic performance
function $U_{1},$ such that
\begin{equation}
U_{0}(x)=\sup_{\pi \in \mathcal{A}(x)}E_{\mathbb{P}}\left[ U_{1}\left( x+\pi
(R-1)\right) \right];\;\;x>0.  \label{eq:Merton}
\end{equation}

Let $\mathcal{U}$ be the set of deterministic utility functions. We
introduce the set of \emph{inverse marginal functions }$\mathcal{I},$
\begin{equation}
\mathcal{I}:=\left \{ I\in C^{1}(\mathbb{R}^{+}):I^{\prime
}<0,\lim_{y\rightarrow \infty }I(y)=0,\lim_{y\rightarrow 0^{+}}I(y)=\infty
\right \} .  \label{I-zero}
\end{equation}%
Note that if functions $U$ and $I$ satisfy $I=(U^{\prime })^{-1}$, then $U$
is a utility function if and only if $I$ is an inverse marginal function.

Assuming for now that a utility function $U_{1}$ satisfying \eqref{eq:Merton}
exists, we consider the inverse marginal functions
\begin{equation*}
I_{0}=(U_{0}^{\prime })^{-1}\text{ \ and \  \ }%
I_{1}=(U_{1}^{\prime })^{-1}.
\end{equation*}
Our main goal in this section is to show that the inverse investment problem %
\eqref{eq:Merton} reduces to a functional equation in terms of $I_0$ and $%
I_1 $; see \eqref{eq:fnEq} below.

The following theorem is one of the main results of this paper, establishing a direct
relationship between the inverse marginals at the beginning and at the end of
the trading period $\left[ 0,1\right] ,$ when the corresponding utilities
are related by (\ref{eq:Merton}).

\begin{theorem}
\label{thm:InvMerton_FunctionEQ} Let $U_{0},U_{1}\in $ $\mathcal{U}$ satisfy
the optimization problem (\ref{eq:Merton}). Then, their inverse marginals $I_{0}$ and $I_{1}$ must
satisfy the linear functional equation
\begin{equation}
I_{1}(ay)+bI_{1}(y)=(1+b)\,I_{0}(c\,y);\quad y>0,  \label{eq:fnEq}
\end{equation}%
where
\begin{equation}  \label{eq:ab}
a=\frac{1-p}{p}\frac{q}{1-q},\text{ \ }b=\frac{1-q}{q}\quad \  \text{and }%
\quad c=\frac{1-p}{1-q}.
\end{equation}
\end{theorem}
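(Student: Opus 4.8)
The plan is to solve the single-period optimization problem \eqref{eq:Merton} explicitly via first-order conditions and then read off the relationship between $I_0$ and $I_1$. First I would write the objective as
\begin{equation*}
G(\pi) = p\,U_1\bigl(x+\pi(u-1)\bigr) + (1-p)\,U_1\bigl(x+\pi(d-1)\bigr),
\end{equation*}
which is strictly concave in $\pi$ on the interval $\mathcal{A}(x)$ since $U_1'' < 0$. Assuming the optimum is interior (this should follow from the Inada conditions on $U_1$, which force the derivative to blow up at the boundary of $\mathcal{A}(x)$ where one of the two wealth arguments hits $0$), the first-order condition $G'(\pi^*) = 0$ reads
\begin{equation*}
p(u-1)\,U_1'\bigl(x+\pi^*(u-1)\bigr) = -(1-p)(d-1)\,U_1'\bigl(x+\pi^*(d-1)\bigr) = (1-p)(1-d)\,U_1'\bigl(x+\pi^*(d-1)\bigr).
\end{equation*}
Denoting the two values of optimal terminal wealth by $X^u = x + \pi^*(u-1)$ and $X^d = x + \pi^*(d-1)$, this says exactly that $U_1'(X^u)/U_1'(X^d)$ equals the ratio of state prices, i.e. $U_1'(X^u) = \lambda\,\rho^u$ and $U_1'(X^d) = \lambda\,\rho^d$ for a common Lagrange multiplier $\lambda > 0$, with $\rho^u = q/p$, $\rho^d = (1-q)/(1-p)$ as in \eqref{StochasticFactor1p}.

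Next I would identify $\lambda$. Applying the envelope theorem to \eqref{eq:Merton} (differentiating the value $U_0(x)$ in $x$, using that the constraint multipliers vanish at an interior optimum) gives $U_0'(x) = p\,U_1'(X^u) + (1-p)\,U_1'(X^d) = \lambda\bigl(p\rho^u + (1-p)\rho^d\bigr) = \lambda\bigl(q + (1-q)\bigr) = \lambda$. Hence $\lambda = U_0'(x)$, so that
\begin{equation*}
X^u = I_1\bigl(\rho^u\,U_0'(x)\bigr), \qquad X^d = I_1\bigl(\rho^d\,U_0'(x)\bigr),
\end{equation*}
where $I_1 = (U_1')^{-1}$. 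The remaining algebraic relation comes from eliminating $\pi^*$ and $x$: from the two expressions for $X^u, X^d$ we have $X^u - X^d = \pi^*(u-d)$ and, more usefully, the budget/self-financing identity $(1-q)X^u + q X^d = x$ (equivalently $qX^u + (1-q)X^d$ — I need to be careful about which risk-neutral weighting reproduces $x$; since $X = x + \pi(R-1)$ and $\mathbb{E}^{\mathbb{Q}}[R-1]=0$, the martingale property gives $x = \mathbb{E}^{\mathbb{Q}}[X] = q X^u + (1-q) X^d$). Substituting,
\begin{equation*}
q\,I_1\bigl(\rho^u U_0'(x)\bigr) + (1-q)\,I_1\bigl(\rho^d U_0'(x)\bigr) = x = I_0\bigl(U_0'(x)\bigr).
\end{equation*}

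Finally I would change variables. Since $U_0'$ is a strictly decreasing surjection of $\mathbb{R}^+$ onto $\mathbb{R}^+$ (Inada), set $y = \rho^d\,U_0'(x)$, so that $U_0'(x) = y/\rho^d$ and $\rho^u U_0'(x) = (\rho^u/\rho^d)\,y$. With $\rho^u/\rho^d = \frac{q/p}{(1-q)/(1-p)} = \frac{1-p}{p}\frac{q}{1-q} = a$ and $1/\rho^d = \frac{1-p}{1-q} = c$, the displayed identity becomes $q\,I_1(ay) + (1-q)\,I_1(y) = I_0(cy)$. Dividing through by $q$ and noting $(1-q)/q = b$ and $(1+b) = 1/q$ yields precisely \eqref{eq:fnEq}. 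The main obstacle I anticipate is the careful justification that the optimum is interior for every $x>0$ — that is, that the Inada conditions (which are imposed on $U_1$, the function we are trying to construct, hence available since the hypothesis assumes $U_1 \in \mathcal{U}$) genuinely prevent $\pi^*$ from hitting $\underline{\pi}(x)$ or $\overline{\pi}(x)$ — together with keeping straight which of $q$, $1-q$ multiplies the up-state versus down-state wealth in the pricing identity; a sign error there propagates into $a$, $b$, $c$. Everything else is routine concave optimization and bookkeeping.
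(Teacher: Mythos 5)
Your proposal is correct and follows essentially the same route as the paper: first-order condition plus the envelope identity to pin down $U_1'(X^u)=\rho^u U_0'(x)$ and $U_1'(X^d)=\rho^d U_0'(x)$, then the budget relation $qX^u+(1-q)X^d=x$ (which the paper obtains by eliminating $\pi^*$ rather than by invoking the $\mathbb{Q}$-martingale property, but this is the same computation), and finally the substitution $y=\rho^d U_0'(x)$. Your resolution of the weighting question is the right one, and your remark on interiority via the Inada conditions is consistent with (indeed slightly more explicit than) the paper's appeal to ``standard arguments.''
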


\begin{proof}
From standard arguments, we deduce that for all $x>0$, there exists an optimizer $\pi ^{\ast }\left( x\right)$ for (\ref{eq:Merton}) satisfying the first-order condition
\begin{equation}
p(u-1)U_{1}^{\prime }(x+\pi ^{\ast }(x)(u-1))
+(1-p)
(d-1)
U_{1}^{\prime }(x+\pi
^{\ast }(x)(d-1))=0. \label{eq:Euler2}
\end{equation}
Indeed, let $f(\pi) := \E\left[U_1\left(x + \pi (R-1) \right) \right]$. By concavity of $U_1(\cdot)$, one has
	\[
		f^{\prime\prime}(\pi) = \E\left[ (R-1)^2 U_1^{\prime\prime}\left( x + \pi (R-1) \right) \right] \le 0;\quad \underline{\pi}(x)<\pi<\overline{\pi}(x).
	\]
	Furthermore,
	\[
		f^\prime(\underline{\pi}(x)) = p (u-1)  U_1^\prime(0) + (1-p) (d-1) U_1^\prime\Big( x + \underline{\pi}(x) (d-1) \Big) = +\infty
	\]
	and
	\[
		f^\prime(\overline{\pi}(x)) = p (u-1)  U_1^\prime\Big( x + \overline{\pi}(x) (u-1) \Big) + (1-p) (d-1) U_1^\prime(0) = -\infty.
	\]
	where we used the Inada condition $U_1^\prime(0)=+\infty$ and that $x + \overline{\pi}(x) (d-1) = x + \underline{\pi}(x) (u-1) = 0$ by the definition $\overline{\pi}(x)$ and $\underline{\pi}(x)$. Therefore, for any $x>0$, there exists a unique $\pi^*(x)\in(\underline{\pi}(x),\overline{\pi}(x))$ such that $f^\prime(\pi^*(x))=0$, and \eqref{eq:Euler2} follows.

On the other hand, we have from (\ref{eq:Merton}) that
\[
U_{0}(x) = p\,U_{1}(x+\pi ^{\ast }(x)(u-1))+(1-p)\,U_{1}(x+\pi ^{\ast
}(x)(d-1)).
\]
Differentiating the above equation yields
\begin{align*}
U_{0}^{\prime }(x) &= p\,U_{1}^{\prime }(x+\pi ^{\ast }(x)(u-1))+(1-p)\,U_{1}^{\prime }(x+\pi
^{\ast }(x)(d-1))\\
&\qquad+(\pi ^{\ast })'(x)\big(p(u-1)U_{1}^{\prime }(x+\pi ^{\ast }(x)(u-1))+(1-p)(d-1)U_{1}^{\prime }(x+\pi
^{\ast }(x)(d-1))\big)
\end{align*}
and, using \eqref{eq:Euler2}, one obtains
\begin{equation}\label{eq:envelop}
	U_{0}^{\prime }(x) = p\,U_{1}^{\prime }(x+\pi ^{\ast }(x)(u-1))+(1-p)\,U_{1}^{\prime }(x+\pi
^{\ast }(x)(d-1)).
\end{equation}
Solving the linear system (\ref{eq:Euler2})-\eqref{eq:envelop} gives
\begin{equation*}
U_{1}^{\prime }(x+\pi ^{\ast }(x)(u-1))=\frac{(1-d)}{p(u-d)}U_{0}^{\prime
}(x)
\end{equation*}
and
\begin{equation*}
U_{1}^{\prime }(x+\pi ^{\ast }(x)(d-1))=\frac{(u-1)}{(1-p)(u-d)}%
U_{0}^{\prime }(x).
\end{equation*}%
Therefore, the optimal allocation function $\pi^*(x)$ satisfies
\begin{equation}
\begin{cases}
x+\pi ^{\ast }(x)(u-1)&=I_{1}\left( \frac{1-d}{p(u-d)}U_{0}^{\prime
}(x)\right) , \\
x+\pi ^{\ast }(x)(d-1)&=I_{1}\left( \frac{u-1}{(1-p)(u-d)}U_{0}^{\prime
}(x)\right),%
\end{cases}
\label{Xstar}
\end{equation}%
from which we obtain the solution
\[
\pi ^{\ast }(x)=\frac{1}{u-d}\left( I_{1}\left( \frac{1-d}{p(u-d)}%
U_{0}^{\prime }(x)\right) -I_{1}\left( \frac{u-1}{(1-p)(u-d)}U_{0}^{\prime
}(x)\right) \right);\;\;x>0.
\]
Substituting the above in either of the equations in \eqref{Xstar} yields
\[
\frac{1-d}{u-d}\,I_{1}\left( \frac{1-d}{p(u-d)}U_{0}^{\prime }(x)\right) +%
\frac{u-1}{u-d}\,I_{1}\left( \frac{u-1}{(1-p)(u-d)}U_{0}^{\prime }(x)\right)
=x.
\]
Changing variables $x=I_{0}\left( \frac{(1-p)(u-d)}{u-1}y\right)$%
, $y>0,$ the above becomes
\[
I_{1}\left( \frac{(1-p)(1-d)}{p(u-1)}\,y\right) +\frac{u-1}{1-d}\,I_{1}(y)=%
\frac{u-d}{1-d}\,I_{0}\left( \frac{(1-p)(u-d)}{u-1}y\right);\;\; y>0.
\]
Noting (\ref{eq:ab}) we conclude.
\end{proof}


The next theorem shows how to recover
$U_{1}$ from $I_{1}$ and  derives
the optimal portfolio $\pi ^{\ast }\left( x\right) $ and its wealth $X^{\ast
}\left( x\right) $.

%

\begin{theorem}
\label{thm:U1} Let $U_{0}$ be a utility function and $I_{0}$ be its inverse
marginal, and $I_1$ be an inverse marginal solving the
functional equation \eqref{eq:fnEq}. Let also $\rho_1$ be the pricing kernel given by \eqref{StochasticFactor1p}. Then, the following statements hold.

\begin{enumerate}
\item[(i)] The function $U_{1}$ defined by
\begin{equation}
U_{1}(x):=U_{0}(1)+E_{\mathbb{P}}\left[ \int_{I_{1}(\rho _{1}U_{0}^{\prime
}(1))}^{x}I_{1}^{-1 }(\xi )d\xi \right];\;\; x>0,  \label{U-one}
\end{equation}%
is a well-defined utility function.

\item[(ii)] We have
\begin{equation*}
U_{0}(x)=\sup_{\pi \in \mathcal{A}(x)}E_{\mathbb{P}}\left[ U_{1}\left( x+\pi
(R-1)\right) \right];\;\; x>0.
\end{equation*}

\item[(iii)] The optimal wealth $X_{1}^{\ast }(x)$ and the associated optimal
investment allocation $\pi ^{\ast }\left( x\right) $ are given,
respectively, by
\begin{align*}
X_1^{\ast }(x)& =I_{1}(\rho _{1}U_{0}^{\prime }(x))=X^{\ast ,u}(x)\mathbf{1}
_{\{R=u\}}+X^{\ast ,d}(x)\mathbf{1}_{\{R=d\}}\\
\intertext{and}
\pi ^{\ast }\left( x\right) & =\frac{X^{\ast ,u}(x)-X^{\ast ,d}(x)}{u-d},
\end{align*}
with
\begin{equation*}
X^{\ast ,u}=I_{1}\left( \frac{q}{p}U_{0}^{\prime }(x)\right) \quad \text{
and }\quad X^{\ast ,d}=I_{1}\left( \frac{1-q}{1-p}U_{0}^{\prime }(x)\right) .
\end{equation*}
\end{enumerate}
\end{theorem}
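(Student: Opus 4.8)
The plan is to derive all three parts from three ingredients: the elementary inversion properties of an element of $\mathcal{I}$; a single ``budget identity'' for the two candidate terminal wealths, which is exactly what the functional equation \eqref{eq:fnEq} encodes once the right change of variable is made; and the fact that the additive constant $U_0(1)$ and the random lower limit $I_1(\rho_1U_0'(1))$ in \eqref{U-one} are tailored so that the value function induced by $U_1$ agrees with $U_0$ at $x=1$. I would establish (i), then (iii), and deduce (ii) last. I expect the main (and essentially only) nontrivial point to be recognizing \eqref{eq:fnEq} as that budget constraint; everything else is bookkeeping together with a standard compactness/strict-concavity argument for the single-period optimizer.

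For (i), I would first record that, since $I_1\in\mathcal{I}$ (see \eqref{I-zero}), $I_1$ is a $C^1$ strictly decreasing bijection of $\mathbb{R}^+$ onto $\mathbb{R}^+$, hence so is $I_1^{-1}$, with $I_1^{-1}(0^+)=\infty$ and $I_1^{-1}(\infty)=0^+$. Because $\rho_1$ in \eqref{StochasticFactor1p} takes only the two positive values $q/p$ and $(1-q)/(1-p)$, the random lower limit in \eqref{U-one} takes only the two positive values $\xi^{u}:=I_1\!\big(\tfrac{q}{p}U_0'(1)\big)$ and $\xi^{d}:=I_1\!\big(\tfrac{1-q}{1-p}U_0'(1)\big)$, so that
\[
U_1(x)=U_0(1)+p\int_{\xi^{u}}^{x}I_1^{-1}(\xi)\,d\xi+(1-p)\int_{\xi^{d}}^{x}I_1^{-1}(\xi)\,d\xi
\]
is a finite real number for every $x>0$. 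Differentiating term-by-term gives $U_1'(x)=I_1^{-1}(x)$, so $U_1\in C^2(\mathbb{R}^+)$, $U_1'>0$, $U_1''=(I_1^{-1})'<0$, and the Inada conditions hold by the boundary behaviour of $I_1^{-1}$; equivalently $(U_1')^{-1}=I_1\in\mathcal{I}$, so $U_1$ is a utility function by the observation following \eqref{I-zero}.

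For (iii), I would fix $x>0$ and set $X^{\ast,u}:=I_1\!\big(\tfrac{q}{p}U_0'(x)\big)$ and $X^{\ast,d}:=I_1\!\big(\tfrac{1-q}{1-p}U_0'(x)\big)$. The key step is the budget identity $qX^{\ast,u}+(1-q)X^{\ast,d}=x$: with $y:=\tfrac{1-q}{1-p}U_0'(x)$ one has $\tfrac{q}{p}U_0'(x)=ay$ for $a,b,c$ as in \eqref{eq:ab}, so \eqref{eq:fnEq} gives $qX^{\ast,u}+(1-q)X^{\ast,d}=q\big(I_1(ay)+bI_1(y)\big)=q(1+b)I_0(cy)=I_0(U_0'(x))=x$, using $q(1+b)=1$ and $cy=U_0'(x)$. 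Setting $\pi^{\ast}(x):=(X^{\ast,u}-X^{\ast,d})/(u-d)$, the budget identity and $qu+(1-q)d=1$ give $x+\pi^{\ast}(x)(u-1)=X^{\ast,u}>0$ and $x+\pi^{\ast}(x)(d-1)=X^{\ast,d}>0$, so $\pi^{\ast}(x)$ is interior to $\mathcal{A}(x)=[\underline{\pi}(x),\overline{\pi}(x)]$; and since $U_1'=I_1^{-1}$ we get $U_1'(X^{\ast,u})=\tfrac{q}{p}U_0'(x)$, $U_1'(X^{\ast,d})=\tfrac{1-q}{1-p}U_0'(x)$, whence
\[
p(u-1)U_1'(X^{\ast,u})+(1-p)(d-1)U_1'(X^{\ast,d})=U_0'(x)\big(qu+(1-q)d-1\big)=0.
\]
As $\pi\mapsto pU_1(x+\pi(u-1))+(1-p)U_1(x+\pi(d-1))$ is strictly concave on the compact set $\mathcal{A}(x)$, this first-order condition identifies $\pi^{\ast}(x)$ as its unique maximizer, which gives (iii): $X_1^{\ast}(x)=x+\pi^{\ast}(x)(R-1)=X^{\ast,u}\mathbf{1}_{\{R=u\}}+X^{\ast,d}\mathbf{1}_{\{R=d\}}=I_1(\rho_1U_0'(x))$ by \eqref{StochasticFactor1p}.

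Finally, for (ii), let $V(x)$ denote the supremum in \eqref{eq:Merton}; by (iii), $V(x)=pU_1(X^{\ast,u}(x))+(1-p)U_1(X^{\ast,d}(x))$, which is $C^1$ since $I_1,U_0'\in C^1$. Differentiating and using $U_1'(X^{\ast,u})=\tfrac{q}{p}U_0'(x)$, $U_1'(X^{\ast,d})=\tfrac{1-q}{1-p}U_0'(x)$ and the budget identity, I get
\[
V'(x)=U_0'(x)\,\frac{d}{dx}\big(qX^{\ast,u}(x)+(1-q)X^{\ast,d}(x)\big)=U_0'(x),
\]
so $V-U_0$ is constant. Evaluating at $x=1$, where $X^{\ast,u}(1)=\xi^{u}$ and $X^{\ast,d}(1)=\xi^{d}$, and inserting the explicit form of $U_1$ from (i), the cross terms cancel:
\[
pU_1(\xi^{u})+(1-p)U_1(\xi^{d})=U_0(1)+p(1-p)\Big(\int_{\xi^{d}}^{\xi^{u}}+\int_{\xi^{u}}^{\xi^{d}}\Big)I_1^{-1}(\xi)\,d\xi=U_0(1),
\]
so $V(1)=U_0(1)$ and therefore $V\equiv U_0$ on $\mathbb{R}^+$, establishing (ii).
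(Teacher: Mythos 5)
Your proof is correct, and for parts (ii) and (iii) it takes a genuinely different route from the paper's. For (i) you do exactly what the paper does (split the expectation over the two values of $\rho_1$, differentiate to get $U_1'=I_1^{-1}$, and read off the utility properties from $I_1\in\mathcal{I}$). The paper, however, proves (ii) first: it establishes an auxiliary lemma showing that \eqref{eq:Merton} is equivalent to the inverse-marginal identity $U_0(I_0(y))=E_{\mathbb{P}}[U_1(I_1(\rho_1 y))]$, and then verifies the latter by introducing the two-parameter function $F(x,c)=U_0(c)+p\int_{x_u(c)}^x I_1^{-1}+(1-p)\int_{x_d(c)}^x I_1^{-1}$ and showing $\partial F/\partial c\equiv 0$ (using \eqref{eq:fnEq} rewritten as $I_0(y)=p\rho^u I_1(\rho^u y)+(1-p)\rho^d I_1(\rho^d y)$); part (iii) is then simply cited from classical expected-utility theory. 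You reverse the logical order: you prove (iii) from scratch — the budget identity $qX^{*,u}+(1-q)X^{*,d}=x$ extracted from \eqref{eq:fnEq} by the substitution $y=\tfrac{1-q}{1-p}U_0'(x)$, the first-order condition, interiority, and strict concavity — and then deduce (ii) by the envelope computation $V'=U_0'$ together with the explicit cancellation $pU_1(\xi^u)+(1-p)U_1(\xi^d)=U_0(1)$ at the anchor point $x=1$. Your version is more self-contained (it does not defer the verification of the optimizer to ``standard arguments'') and makes the role of \eqref{eq:fnEq} as a replicating budget constraint transparent; the paper's $F(x,c)$ computation buys, as a by-product, the Remark that the base point $1$ in \eqref{U-one} may be replaced by any $c>0$ without changing $U_1$, which your argument does not directly give. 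The only points worth tightening are cosmetic: when you invoke compactness of $\mathcal{A}(x)$ you should note that the objective may degenerate at the endpoints (where one terminal wealth hits $0$), so the maximum is attained at the interior critical point by strict concavity and monotonicity near the boundary — the same gloss the paper makes when it appeals to ``standard arguments.''
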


\begin{proof}
	See Appendix \ref{app:U1}.
\end{proof}

\begin{remark}
As shown in the proof of Theorem \ref{thm:U1}, we can replace \eqref{U-one}
with
\begin{equation*}
U_{1}(x):=U_{0}(c)+E_{\mathbb{P}}\left[ \int_{I_{1}(\rho _{1}U_{0}^{\prime
}(c))}^{x}I_{1}^{-1}(\xi )d\xi \right];\;\; x>0,
\end{equation*}%
 for any arbitrary constant $c>0$. The choice of $c$ does not change
the value of $U_{1}(x)$, neither the optimal policies.
\end{remark}

As the results of Theorem \ref{thm:U1} indicate, the inverse investment problem \eqref{eq:Merton}
essentially reduces to solving the functional equation \eqref{eq:fnEq}. We study this equation next.

\section{A functional equation for inverse marginals}

\label{sec:FunEq}

In this section, we analyze the linear functional equation (\ref%
{eq:fnEq}),
with $I_{0}$ given and $I_{1}$ to be found, for positive constants $a,b,c,$
given by (\ref{eq:ab}).
We provide conditions for the existence and uniqueness of its solutions and,
in particular, solutions in the class of inverse marginal functions.

First of all, we note that the solution to (\ref%
{eq:fnEq}) is known in the literature for $b < 0$ (e.g. \cite{PM1998}). Unfortunately, in our case $b=\frac{1-q}{q}>0$ for which
we are not aware of any results to the best of our knowledge.

When $a=1$, the unique solution is trivially $%
I_{1}(y)=I_{0}(y).$ This is economically intuitive. If $%
p=q, $ then essentially there is no risk premium to exploit. As a result,
when $r=0$ as assumed herein, the
pricing kernel becomes a constant, $\rho =1,$ and the optimal wealth reduces
to $X^{\ast }\left( x\right) =x.$ In turn, the value function (at $t=0)$
coincides with the terminal utility. So
the forward performance remains constant, $%
U_{0}\left( x\right) =U_{1}\left( x\right), $ and thus their inverse
marginals $I_{0}$ and $I_{1}$ coincide.\footnote{This is also in accordance with the time-monotone forward processes in the
continuous-time setting. For example, in \cite{Musiela-Z-SIFIN},
it is shown that this forward performance is given by $U\left( x,t\right)
=u\left( x,\int_{0}^{t}\left \vert \lambda _{s}\right \vert ^{2}ds\right) ,$
with $u\left( x,t\right) $ being a deterministic function and the process $\lambda
$ is the market price of risk. If $\lambda \equiv 0,$ then $%
U\left( x,t\right) =u\left( x,0\right) =U\left( x,0\right) ,$ for all $t\ge0.$} Indeed, there is no reason to modify the performance function in a market with no investment opportunities.

\bigskip

Henceforth we assume that $a\neq 1$. We start with an example showing that a general solution of (\ref%
{eq:fnEq}) may not be unique, even if we restrict the solutions to
inverse marginal functions.

\begin{example}
\label{eq:CRRA_NonUnique}
{\rm Let $I_{0}(y)=y^{\log _{a}b}$, $%
y>0$, for constants $a,b>0$ such that $\log_a b <0$. It is easy to check that the function $I_{1}(y)=\delta y^{\log _{a}b}$%
, $y>0$, with $\delta =\frac{(1+b)}{2b\,c^{-\log _{a}b}}>0$, is a solution to \eqref{eq:fnEq}.

However, this particular solution is not the only solution. Indeed, consider any differentiable anti-periodic function,
say $\Theta (z)=-\Theta (z+\ln a),$ for which there exists a constant $M>0$
such that
\begin{equation*}
\underset{z\in \mathbb{R}}{\sup }\left( |\Theta (z)|,|\Theta ^{\prime
}(z)|\right) <M<-\delta\, \frac{\log _{a}b}{1-\log _{a}b}=-\frac{(1+b)\log
_{a}b}{2b\,c^{-\log _{a}b}(1-\log _{a}b)}.
\end{equation*}%
For instance, $\Theta (x)=M\sin (\frac{x}{\ln a}\pi )$ is such a function. One can then
directly check that the function
\begin{equation*}
\tilde{I}_{1}(y)=y^{\log _{a}b}\left( \delta +\Theta (\ln y)\right);\;\;y>0
\end{equation*}%
 is a solution.

As a matter of fact, both solutions $%
I_{1}$ and $\tilde{I}_{1}$ are inverse marginals. This is obvious for $%
I_{1}$. As for $\tilde{I}_{1},$ we have lim$_{y\to\infty }\tilde{I}%
_{1}(y)=0$ since $\log _{a}b<0$. Moreover, it follows from the inequality $\tilde{I}_{1}(y)\geq
y^{\log _{a}b}(\delta -M)$, $y>0$, that lim$_{y\to 0^+}\tilde{I}%
_{1}(y)=\infty $. Furthermore,
\begin{equation*}
\begin{split}
\tilde{I}_{1}^{\prime }(y)& =y^{\log _{a}b-1}\log _{a}b\left( \delta +\Theta
(\ln y)+\frac{\Theta ^{\prime }(\ln y)}{\log _{a}b}\right) \\
& \leq y^{\log _{a}b-1}\log _{a}b\left( \delta -\frac{M\log _{a}b-M}{\log
_{a}b}\right) <0;\;\;y>0.
\end{split}%
\end{equation*}
Thus, in general, there is no uniqueness even among inverse marginal functions.}
\end{example}

The above example suggests that we need additional conditions to ensure uniqueness.
To identify these conditions, we first note that (\ref%
{eq:fnEq}) is a
functional equation of the more general form
\begin{equation}
F\left( f(y)\right) =g(y)F(y)+h(y),  \label{F-G}
\end{equation}%
with $f$, $g$, and $h$ given functions, $y\in \mathcal{Y}\subseteq \mathbb{R}
$ and $F$ to be found. The equations of this type have been studied in the literature; see \cite{kuczmaEtAl1990} and the references therein for
a general exposition.

In general, such equations have many solutions. A trivial example is $F(y+1)=F(y),$ $%
y\in \mathbb{R}$, for which any periodic function with period 1 is a
solution. Such non-uniqueness often renders the underlying equation
inapplicable for concrete problems, where a single well-defined solution is
usually needed. For the general equation (\ref{F-G}),
conditions for the uniqueness of solutions usually limit the set of solutions by imposing additional
assumption on $F(y_{0})$, where $y_{0}$ is a fixed point for $f$: $%
f(y_{0})=y_{0}.$ In the example of the equation $F(y+1)=F(y),$ $y\in \mathbb{%
R}$, if we require a solution to be such that
$\lim_{y \to  \infty }F(y) = a \in\R$, then $F=a$
becomes the only possible solution. Note here that $%
\infty $ is actually a fixed point of the function $f(y)=y+1$.

For equation (\ref{eq:fnEq}), we have that $f\left( y\right) =ay,$ $g\left( y\right) =-b$
and $h\left( y\right) =(1+b)\,G(c\,y).$ Therefore, uniqueness conditions
should impose additional assumptions on $F$ at $y_{1}=0$ and $y_{2}=\infty $%
, which are the fixed points of $f(y)=ay$.

\bigskip

We start with the following auxiliary result in which we provide \textit{%
general} uniqueness conditions for equation (\ref{eq:fnEq}). Afterwards, we will strengthen the results for the family of inverse marginals.

\begin{lemma}
\label{prop:Uniqueness} Let $I_0$ be given. Then,
there exists at most one solution to (\ref{eq:fnEq}), say $I$, satisfying $\lim_{y \to 0^+}y^{-\log _{a}b}I(y)=0$. Similarly, there exists at most one solution satisfying $\lim_{y \to  \infty }y^{-\log _{a}b}I(y)=0$.
\end{lemma}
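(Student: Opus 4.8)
The plan is to reduce the uniqueness statement to a statement about the homogeneous equation and then exploit the scaling structure of $f(y)=ay$ via the substitution $y=e^z$. First I would observe that if $I$ and $\tilde I$ both solve \eqref{eq:fnEq} with the same given $I_0$, then their difference $D:=I-\tilde I$ solves the homogeneous equation
\begin{equation*}
D(ay)+bD(y)=0;\quad y>0.
\end{equation*}
So it suffices to show that the only solution $D$ of this homogeneous equation with $\lim_{y\to 0^+}y^{-\log_a b}D(y)=0$ is $D\equiv 0$, and similarly for the condition at $\infty$.

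The key step is to introduce the change of variables $y=e^z$ and write $D(e^z)=e^{z\log_a b}\,\Phi(z)$, i.e. factor out the ``expected growth rate'' $y^{\log_a b}$ that appears in Example~\ref{eq:CRRA_NonUnique}. A direct computation using $\log_a b=\frac{\ln b}{\ln a}$ shows that the homogeneous equation $D(ay)=-bD(y)$ becomes, after substituting $z\mapsto z+\ln a$ on the left,
\begin{equation*}
e^{(z+\ln a)\log_a b}\Phi(z+\ln a)=-b\,e^{z\log_a b}\Phi(z),
\end{equation*}
and since $e^{\ln a\cdot \log_a b}=b$, this collapses to $\Phi(z+\ln a)=-\Phi(z)$; that is, $\Phi$ is anti-periodic with period $\ln a$ (equivalently, periodic with period $2\ln a$). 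The boundary condition $\lim_{y\to 0^+}y^{-\log_a b}D(y)=0$ translates into $\lim_{z\to-\infty}\Phi(z)=0$ (if $a>1$, so $\ln a>0$; one handles $a<1$ by a symmetric argument, or by noting the two cases $a>1$ and $a<1$ are interchanged by swapping the roles of the two boundary conditions). But a periodic function that tends to $0$ along a sequence going to $-\infty$ — in fact along the whole arithmetic progression $z_0-2n\ln a$ — must be identically zero, since periodicity forces $\Phi(z_0)=\Phi(z_0-2n\ln a)\to 0$ for every fixed $z_0$. Hence $\Phi\equiv 0$, so $D\equiv 0$, giving uniqueness. The argument for the condition at $y=\infty$ is identical after replacing $z\to-\infty$ with $z\to+\infty$.

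The main obstacle — really the only subtlety — is bookkeeping around the sign of $\ln a$ and the precise correspondence between ``$y\to 0^+$'' / ``$y\to\infty$'' and ``$z\to\mp\infty$'', since $a\neq 1$ can be either greater or less than $1$ depending on the market parameters $p,q$; one must be careful that each of the two stated boundary conditions genuinely kills the anti-periodic freedom (rather than, say, both conditions collapsing to the same one). A secondary point worth stating cleanly is why $y^{\log_a b}$ is the ``right'' normalizing weight: it is precisely the family of multiplicative solutions of the homogeneous equation, $D(y)=y^{\log_a b}$ satisfies $D(ay)=a^{\log_a b}D(y)=b\,D(y)$... — wait, that gives $+b$, not $-b$; the genuine homogeneous solutions are $y^{\log_a b}\cdot(\text{anti-periodic in }\ln y)$, which is exactly the structure Example~\ref{eq:CRRA_NonUnique} exhibits and exactly what the computation above recovers. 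Once the reduction to anti-periodicity is in hand, no estimates or regularity arguments are needed; continuity of $D$ (hence of $\Phi$) is not even required — only the pointwise functional relation and the one-sided limit.
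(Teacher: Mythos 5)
Your proof is correct and is essentially the paper's own argument in logarithmic coordinates: the paper likewise reduces to the homogeneous equation $w(ay)=-bw(y)$, iterates to get $|w(y)|=b^{k}|w(y/a^{k})|=y^{\log_a b}\left(y/a^{k}\right)^{-\log_a b}|w(y/a^{k})|$, and lets $k\to\pm\infty$ so that the boundary condition at $0^+$ or $\infty$ kills $w$ — which is exactly your statement that $y^{-\log_a b}|D(y)|$ is invariant under $y\mapsto a^{k}y$ (i.e.\ $|\Phi|$ is $\ln a$-periodic) and hence vanishes identically. Your handling of the sign of $\ln a$ and your remark that no continuity is needed are both consistent with the paper's proof.
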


\begin{proof}
See Appendix \ref{app:lemma7}.
\end{proof}

We note that the function $\tilde{I}_{1}$ in Example \ref{eq:CRRA_NonUnique}
satisfies neither conditions in Lemma \ref{prop:Uniqueness}, and thus
uniqueness fails.

\bigskip


Next, we state the main result for this section, which provides sufficient
conditions for existence and uniqueness of solutions to (\ref{eq:fnEq})
that are inverse marginal functions.

\bigskip

%
%
%

\begin{theorem}
\label{thm:ExistenceOfInvMarginal} Let $I_{0}$ in (\ref{eq:fnEq}) be an
inverse marginal, i.e. $I_{0}\in \mathcal{I}$ with $\mathcal{I}$ defined
in \eqref{I-zero}. Define the functions
\begin{equation}
\Phi _{0}(y)=I_{0}(a\,c\,y)-bI_{0}(c\,y)\quad \text{ and }\quad \Psi
_{0}(y)=y^{-\log _{a}b}I_{0}(c\,y);\;\;y>0.  \label{eq:Psi_0}
\end{equation}%
The following assertions hold:

\begin{enumerate}
\item[(i)] If $\Phi _{0}$ is strictly increasing and, either $a>1$ and $%
\lim_{y \to  \infty }\Psi _{0}(y)=0$ or $a<1$ and $\lim_{y\to 0^+}\Psi _{0}(y)=0$, then a solution of (\ref{eq:fnEq}) is given
by
\begin{equation}
I_{1}(y)=\frac{1+b}{b}\sum_{m=0}^{\infty }\left( -1\right)
^{m}b^{-m}I_{0}(\,a^{m}\,c\,y);\;\;y>0.  \label{eq:fnEq_ParSol1}
\end{equation}

\item[(ii)] If $\Phi _{0}$ is strictly decreasing and, either $a>1$ and $%
\lim_{y \to 0^+}\Psi _{0}(y)=0$ or $a<1$ and $\lim_{y \to  \infty
}\Psi _{0}(y)=0$, then a solution of (\ref{eq:fnEq}) is given
by
\begin{equation}\label{eq:fnEq_ParSol2}
I_{1}(y)=(1+b)\sum_{m=0}^{\infty }(-1)^{m}b^{m}I_{0}(a^{-\left( m+1\right)
}c\,y);\;\;y>0.
\end{equation}

\item[(iii)] In parts (i) and (ii), the corresponding $I_{1}$ satisfies the
uniqueness condition(s) of Lemma \ref{prop:Uniqueness}  and, moreover, $I_{1}\in \mathcal{I}$,
i.e., $I_{1}$ preserves the inverse marginal properties.

\item[(iv)] The function $I_{1}$ in parts (i) and (ii), respectively, is the
only positive solution of (\ref{eq:fnEq}). It is also the
only inverse marginal function that solves (\ref{eq:fnEq}).
\end{enumerate}
\end{theorem}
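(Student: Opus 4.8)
The plan is to treat parts (i)--(iv) in sequence, with most of the work in (i) and (ii) being structurally identical (related by $a\leftrightarrow a^{-1}$, $b\leftrightarrow b^{-1}$), so I would prove (i) in full detail and then indicate how (ii) follows by the analogous argument. For part (i), the first step is to verify formally that the series \eqref{eq:fnEq_ParSol1} solves \eqref{eq:fnEq}: substitute $I_1(ay)+bI_1(y)$, shift the index $m\mapsto m-1$ in the first sum, and watch the telescoping cancellation leave exactly $(1+b)I_0(cy)$. This is a routine manipulation once convergence is in hand, so the real content is establishing that the series converges. Here the hypotheses enter: when $a>1$, the tail $I_0(a^m cy)$ is controlled because $\lim_{y\to\infty}\Psi_0(y)=0$ forces $I_0(a^m cy)=o\bigl((a^m y)^{\log_a b}\bigr)=o(b^m y^{\log_a b})$, so the general term $b^{-m}I_0(a^m cy)$ is summable; the monotonicity of $\Phi_0$ is not needed for mere convergence but will be used to get strict negativity of the derivative below. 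When $a<1$ the same estimate applies with the roles of $0$ and $\infty$ swapped via $\lim_{y\to 0^+}\Psi_0(y)=0$.

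The next step, toward part (iii), is to show $I_1\in\mathcal{I}$, i.e.\ $I_1\in C^1(\mathbb{R}^+)$ with $I_1'<0$, $I_1(\infty)=0$, $I_1(0^+)=\infty$. Termwise differentiation is justified by the same geometric-type bound applied now to $I_0'$ (using that $I_0\in\mathcal{I}$ and, again, the $\Psi_0$ growth control, together with monotonicity of $I_0$ which lets one dominate $I_0'$ on dyadic-type blocks $[a^m cy, a^{m+1}cy]$ by increments of $I_0$). The sign $I_1'<0$ is where $\Phi_0$ strictly increasing does real work: grouping consecutive pairs of terms, $b^{-2m}\bigl[\,b^{-1}(\text{term }2m)+(\text{term }2m+1)\,\bigr]$ reorganizes the derivative series into a sum of terms each proportional to $\Phi_0'$ evaluated at scaled arguments — with $\Phi_0$ as defined in \eqref{eq:Psi_0} this is the natural pairing since $\Phi_0(y)=I_0(acy)-bI_0(cy)$ is precisely the "$b^{-1}\cdot$first minus second" combination that appears. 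The limits $I_1(\infty)=0$ and $I_1(0^+)=\infty$ follow from dominated/monotone convergence applied to the series: as $y\to\infty$ every $I_0(a^m cy)\to 0$; as $y\to 0^+$ the $m=0$ term $\tfrac{1+b}{b}I_0(cy)\to\infty$ while the remaining tail is dominated (again via $\Psi_0$) by something of lower order. The Lemma \ref{prop:Uniqueness} condition is checked by computing $\lim y^{-\log_a b}I_1(y)$: factoring $y^{\log_a b}$ out of the series, each term becomes $b^{-m}(a^m y)^{-\log_a b}\cdot(a^m y)^{\log_a b}b^{-m}$... more cleanly, $y^{-\log_a b}I_0(a^m cy) = a^{m\log_a b}\,\Psi_0(a^m y)=b^m\Psi_0(a^m y)$, so $y^{-\log_a b}I_1(y)=\tfrac{1+b}{b}\sum_m(-1)^m\Psi_0(a^m y)$, and in case $a>1$ this $\to 0$ as $y\to\infty$ (dominated convergence, each $\Psi_0(a^m y)\to 0$), which is exactly the hypothesis of Lemma \ref{prop:Uniqueness}.

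Finally, part (iv): uniqueness among \emph{all positive} solutions, which is stronger than Lemma \ref{prop:Uniqueness}. Suppose $F>0$ is any positive solution and $w=F-I_1$ solves the homogeneous equation $w(ay)=-bw(y)$. By the iteration in the proof of Lemma \ref{prop:Uniqueness}, $|w(y)|=b^k|w(y/a^k)|$ for all $k\in\mathbb{Z}$. The key new input is positivity: since $F>0$ and $I_1>0$, one has the one-sided bounds $-I_1(z)<w(z)<F(z)$, hence $|w(z)|<\max\{F(z),I_1(z)\}$. Since $I_1$ satisfies the Lemma's growth condition and $F>0$ satisfies the homogeneous-type control through $w$, a self-referential estimate pins $w\equiv 0$: concretely, in case $a>1$, $b^k|w(y/a^k)|\le b^k\max\{F(y/a^k),I_1(y/a^k)\}$; the $I_1$ piece vanishes by the computation above, while for the $F$ piece one uses $F=w+I_1$ and $|w(y/a^k)| = b^{-k}|w(y)|$ (iterating the other direction) to get $b^k F(y/a^k)\le b^k I_1(y/a^k) + |w(y)|$, so $|w(y)|\le b^k I_1(y/a^k) + |w(y)| $ — not quite closing — so instead I would iterate only in the direction that makes $b^k\to 0$ and bound the surviving argument of $F$ using that $F$ is positive hence $w(y/a^k)\ge -I_1(y/a^k)$, giving $w(y)=b^k w(y/a^k)\ge -b^k I_1(y/a^k)\to 0$, and symmetrically $w(y)\le \lim b^k F(y/a^k)$; this last limit is handled by noting $b^kF(y/a^k) = b^k w(y/a^k) + b^k I_1(y/a^k)$ and iterating $w$ in the opposite sense as needed. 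The statement "it is also the only inverse marginal solving \eqref{eq:fnEq}" is then immediate since $\mathcal{I}$-functions are in particular positive. The main obstacle I anticipate is exactly this step: extracting uniqueness among \emph{all} positive solutions (not just those with the Lemma's decay) requires leveraging positivity cleverly against the homogeneous relation $w(ay)=-bw(y)$, whose alternating sign means $w$ cannot be one-signed unless it is zero — I would make this rigorous by observing that $w(a^2 y)=b^2 w(y)$ so $w$ has constant sign along the $a^2$-orbit, forcing $w\ge -I_1$ and $w = b^{2k}w(\cdot/a^{2k})$ with $b^{2k}\to 0$ (for $a>1$) to squeeze $w\equiv 0$.
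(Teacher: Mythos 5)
There are two genuine gaps. First, in part (i) you assert that the monotonicity of $\Phi_0$ ``is not needed for mere convergence'' and that the hypothesis $\lim_{y\to\infty}\Psi_0(y)=0$ already makes the general term $b^{-m}I_0(a^m c y)$ summable. It does not: since $b^{-m}I_0(a^m c y)=y^{\log_a b}\,\Psi_0(a^m y)$, the hypothesis only says the terms tend to zero, and an alternating series whose terms merely tend to zero need not converge. The paper's proof convergence argument is the Leibniz test, which requires $\Psi_0(a^m y)\downarrow 0$ monotonically in $m$; this monotonicity is exactly what $\Phi_0$ being (strictly) increasing supplies, via the identity $\Psi_0(a^{m+1}y)-\Psi_0(a^m y)=b^{-m-1}y^{-\log_a b}\Phi_0(a^m y)$ together with $\Phi_0<0$ (which follows from $\Phi_0$ increasing and $\Phi_0(y)\to 0$ as $y\to\infty$ by the Inada condition). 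So the hypothesis you set aside is the one that makes the series converge at all; your convergence step as written fails.

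Second, your argument for part (iv) does not close, as you partly concede. The relation $w(a^2y)=b^2w(y)$ with $b^{2k}\to 0$ presupposes $b<1$, which is not implied by $a>1$ (here $b=(1-q)/q$ is an arbitrary positive constant); and even granting $b^{2k}\to 0$, the bound $|w(y)|\le b^{2k}I_1(y/a^{2k})$ is useless because $I_1(y/a^{2k})\to\infty$, and controlling the product would require the decay condition of Lemma \ref{prop:Uniqueness} at the \emph{other} fixed point, which $I_1$ is not known to satisfy. The paper's route is much shorter and avoids all of this: for \emph{any} positive solution $\tilde I$ of \eqref{eq:fnEq}, positivity of $\tilde I(ay)$ in the equation gives $b\,\tilde I(y)<(1+b)I_0(cy)$, hence $y^{-\log_a b}\tilde I(y)<\frac{1+b}{b}\Psi_0(y)\to 0$; thus every positive solution automatically satisfies the decay hypothesis of Lemma \ref{prop:Uniqueness}, and uniqueness follows from that lemma directly. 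I would also note that for the monotonicity of $I_1$ in part (iii), the paper pairs consecutive terms to write $I_1(y)=-\frac{1}{1+b}\sum_{m\ge 0}b^{-2m}\Phi_0(a^{2m}y)$ and concludes from $\Phi_0$ strictly increasing without ever differentiating the series, which sidesteps the termwise-differentiation justification your plan leaves vague.
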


\begin{proof}
See Appendix \ref{app:ExistenceOfInvMarginal}.
\end{proof}

Now, we apply the above results to the case when the initial utility is a power function. The following
example provides results complementary to the ones in Example \ref%
{eq:CRRA_NonUnique} where uniqueness is lacking since the conditions of
Lemma \ref{prop:Uniqueness} are not satisfied.

\begin{corollary}
\label{eq:Unique} Let $U_{0}(x)=\left( 1-\frac{1}{\theta }\right)^{-1} x^{1-%
\frac{1}{\theta }}$, $x>0$, and assume that $1\neq \theta>0$, $\theta \neq -\log _{a}b$,
with $a,b,c>0$ given by \eqref{eq:ab}. Then, the following assertions hold:

\begin{enumerate}
\item[(i)] The unique inverse marginal function that satisfies the functional
equation \eqref{eq:fnEq} with initial $I_{0}(y)=y^{-\theta }$ is
given by
\begin{equation}
I_{1}(y)=\delta y^{-\theta };\;\;y>0, \label{eq:CRRA_InvMarg_sol}
\end{equation}
where $\delta=\frac{1+b}{%
c^{\theta }(a^{-\theta }+b)}.$
\item[(ii)] The unique utility function $U_{1}$ that satisfies the inverse
investment problem (\ref{eq:Merton}) is given by
\begin{equation*}
U_{1}(x)=\delta ^{\frac{1}{\theta }}\left( 1-\frac{1}{\theta }\right)^{-1} x^{1-%
\frac{1}{\theta }}=\delta ^{\frac{1}{\theta }}U_{0}\left( x\right);\;\;x>0.
\end{equation*}%
\item[(iii)] The corresponding optimal allocation is given by
\begin{equation*}
\pi ^{\ast }(x)=\frac{\delta (p/q)^{\theta}-1}{u-1}\; x;\;\;x>0.
\end{equation*}
\end{enumerate}
\end{corollary}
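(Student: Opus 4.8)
The plan is to apply Theorem~\ref{thm:ExistenceOfInvMarginal} directly to the power-utility data $I_0(y)=y^{-\theta}$, and then use Theorem~\ref{thm:U1} to pass from $I_1$ to $U_1$ and the optimal allocation. First I would verify the hypotheses of Theorem~\ref{thm:ExistenceOfInvMarginal}. With $I_0(y)=y^{-\theta}$ we compute
\[
\Phi_0(y)=I_0(acy)-bI_0(cy)=c^{-\theta}y^{-\theta}\bigl((ac)^{-\theta}c^{\theta}-b\bigr)\quad\text{(up to constants)}=c^{-\theta}\bigl(a^{-\theta}-b\bigr)y^{-\theta},
\]
so $\Phi_0$ is strictly monotone in $y$ with the sign of its slope governed by $\operatorname{sgn}\bigl(-\theta(a^{-\theta}-b)\bigr)$; since $\theta>0$, $\Phi_0$ is strictly increasing iff $a^{-\theta}<b$, i.e. iff $\theta(-\log a)<\log b$ when $a>1$ (and the reverse when $a<1$), and one checks this dovetails exactly with the companion condition on $\Psi_0(y)=y^{-\log_a b}I_0(cy)=c^{-\theta}y^{-\log_a b-\theta}$, whose limit at $0^+$ or $\infty$ is controlled by the sign of $-\log_a b-\theta$. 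The point is that for \emph{every} admissible $\theta$ (with $\theta\neq-\log_a b$ excluded precisely to avoid the degenerate exponent) exactly one of cases (i) or (ii) of Theorem~\ref{thm:ExistenceOfInvMarginal} applies. I would present this as a short case split $a>1$ vs. $a<1$, each further split by the sign of $a^{-\theta}-b$, and note the four cases cover all of parameter space.

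Next I would evaluate the series. In case (i), $I_1(y)=\tfrac{1+b}{b}\sum_{m\ge0}(-1)^m b^{-m}(a^m c y)^{-\theta}=\tfrac{1+b}{b}c^{-\theta}y^{-\theta}\sum_{m\ge0}(-b^{-1}a^{-\theta})^m$, a geometric series with ratio $-a^{-\theta}/b$; it converges precisely when $a^{-\theta}<b$, which is exactly the case-(i) regime, and sums to $\tfrac{1+b}{b}c^{-\theta}y^{-\theta}\cdot\tfrac{1}{1+a^{-\theta}/b}=\tfrac{1+b}{c^\theta(a^{-\theta}+b)}y^{-\theta}$. In case (ii), $I_1(y)=(1+b)\sum_{m\ge0}(-1)^m b^m(a^{-(m+1)}cy)^{-\theta}=(1+b)a^{\theta}c^{-\theta}y^{-\theta}\sum_{m\ge0}(-b a^{\theta})^m$, geometric with ratio $-ba^{\theta}$, convergent exactly when $ba^{\theta}<1$ (the case-(ii) regime), summing to $(1+b)a^{\theta}c^{-\theta}y^{-\theta}\cdot\tfrac{1}{1+ba^{\theta}}=\tfrac{(1+b)a^{\theta}}{c^\theta(1+ba^{\theta})}y^{-\theta}=\tfrac{1+b}{c^\theta(a^{-\theta}+b)}y^{-\theta}$. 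So both cases yield the \emph{same} closed form $I_1(y)=\delta y^{-\theta}$ with $\delta=\tfrac{1+b}{c^\theta(a^{-\theta}+b)}$, proving (i). Uniqueness among inverse marginals is then inherited verbatim from Theorem~\ref{thm:ExistenceOfInvMarginal}(iv).

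For part (ii), since $I_1(y)=\delta y^{-\theta}$ we have $U_1'(x)=I_1^{-1}(x)=(x/\delta)^{-1/\theta}=\delta^{1/\theta}x^{-1/\theta}$, and integrating (using the normalization in \eqref{U-one}, or the constant-$c$ remark) gives $U_1(x)=\delta^{1/\theta}(1-1/\theta)^{-1}x^{1-1/\theta}=\delta^{1/\theta}U_0(x)$; that this $U_1$ indeed solves \eqref{eq:Merton} is Theorem~\ref{thm:U1}(ii). For part (iii), Theorem~\ref{thm:U1}(iii) gives $\pi^*(x)=\tfrac{X^{*,u}(x)-X^{*,d}(x)}{u-d}$ with $X^{*,u}=I_1\bigl(\tfrac{q}{p}U_0'(x)\bigr)$ and $X^{*,d}=I_1\bigl(\tfrac{1-q}{1-p}U_0'(x)\bigr)$; substituting $I_1(y)=\delta y^{-\theta}$ and $U_0'(x)=x^{-1/\theta}$ gives $X^{*,u}=\delta(q/p)^{-\theta}x$ and $X^{*,d}=\delta\bigl(\tfrac{1-q}{1-p}\bigr)^{-\theta}x$, and then a direct simplification using $q/(u-d)=(1-d)^{-1}\cdot$ (and $(1-q)/(u-d)$ analogously, together with the identities in \eqref{eq:ab} so that $a=\tfrac{1-p}{p}\tfrac{q}{1-q}$ relates the two ratios) collapses the difference to $\pi^*(x)=\tfrac{\delta(p/q)^{\theta}-1}{u-1}x$.

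The only genuine obstacle is \textbf{bookkeeping with the constants}: verifying that the convergence condition of each series matches precisely the regime in which that case of Theorem~\ref{thm:ExistenceOfInvMarginal} is invoked, and that the two a priori different geometric sums collapse to the single expression for $\delta$; and, in part (iii), pushing the $X^{*,u}-X^{*,d}$ difference through the definitions of $a,b,c,q$ to land on the stated form. None of this is conceptually deep, but the algebra with $\theta$ in the exponent and the ratios $q/p$, $(1-q)/(1-p)$ must be done carefully. I would organize the write-up so that (i) is the substantive part and (ii)--(iii) are short corollaries of Theorem~\ref{thm:U1}.
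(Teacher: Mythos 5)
Your proposal is correct and follows essentially the same route as the paper: verify the dichotomy of conditions of Theorem \ref{thm:ExistenceOfInvMarginal} using $\Phi_0(y)=(a^{-\theta}-b)c^{-\theta}y^{-\theta}$ and $\Psi_0(y)=c^{-\theta}y^{-(\theta+\log_a b)}$, inherit uniqueness from part (iv) of that theorem, and obtain (ii)--(iii) from Theorem \ref{thm:U1}. The only cosmetic difference is that you derive $\delta$ by summing the geometric series \eqref{eq:fnEq_ParSol1}/\eqref{eq:fnEq_ParSol2}, whereas the paper simply verifies by substitution that $\delta y^{-\theta}$ solves \eqref{eq:fnEq}; both computations check out.
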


Therefore, if we start with an initial power utility $U_{0}$, then the forward utility  at $t=1$ is a multiple of
the initial datum, with the constant given by $\delta ^{\frac{1}{\theta }}$. Note that $\delta$ incorporate both the preference parameter $\theta$ and also the market parameters $a$, $b$, and $c$ at the \emph{beginning} of the trading period $t=0$. Proceeding iteratively, the utilities for all future periods remain power functions. In other words, in the binomial setting, the (predictable) power utility preferences are preserved throughout.


We conclude this section summarizing the findings for existence and uniqueness of solutions.
If equation (\ref{eq:fnEq}) does not admit a solution, then it follows from  Theorem \ref{thm:InvMerton_FunctionEQ} that  there will be no utility function
$U_1$ satisfying equation (\ref{eq:Merton}). Hence there will be no predictable forward performance process starting from the initial marginal utility function $I_0$.
On the other hand,  (\ref{eq:fnEq}) may have more than one solutions and, in particular, more than one solutions that are inverse marginal functions. In this case, problem (\ref{eq:Merton}) has multiple solutions as well.
An open question is which of these solutions can be chosen to be the ``correct" forward utility. Lemma \ref{prop:Uniqueness}  suggests that uniqueness follows from imposing certain decay conditions for large or small wealth; this is in accordance with the well-known elasticity condition in the classical setting.

\section{Construction of the predictable forward performance process}\label{sec:Algo}

We are now ready to present the general algorithm for the construction of
forward performance processes as well as the associated optimal
investment strategies and their wealth processes. We stress that one of the
main strengths of our approach is that for every given trading period, say $[t_{n},t_{n+1}] ,$ we do not have to update the model parameters $\left(
u_{n+1},d_{n+1},p_{n+1}\right) $ for this period until time $t_{n}$ arrives. Thus, we take full advantage of the incoming information
up to time $t_{n}.$ This is in contrast with the classical setting where, as we mentioned earlier, these parameters have to be pre-specified at initial time.

The algorithm is based on repeatedly applying, conditionally on the new ``real-time'' information, the following result on the single-period
inverse investment problem \eqref{eq:Merton}.
\begin{theorem}
\label{thm:IMP} For the inverse investment problem \eqref{eq:Merton},
assume that the initial inverse marginal $I_{0}=(U_{0}^{\prime })^{-1}$
satisfies condition (i) (resp. condition (ii)) in Theorem 8, and define $%
I_{1}$ by \eqref{eq:fnEq_ParSol1} (resp. \eqref{eq:fnEq_ParSol2}).
Then, the unique solution to \eqref{eq:Merton} is given by
\[
U_{1}(x)=U_{0}(1)+E_{\mathbb{P}}\left[ \int_{I_{1}(\rho _{1}U_{0}^{\prime
}(1))}^{x}I_{1}^{-1}(\xi )d\xi \right] ;\quad x>0,
\label{U-1}
\]
where $\rho _{1}$ is as in \eqref{StochasticFactor1p}.
Moreover, the optimal wealth $X_{1}^{\ast }(x)$ and the associated optimal investment
allocation $\pi ^{\ast }\left( x\right) $ are given, respectively, by%
\[
X_{1}^{\ast }(x)=I_{1}(\rho _{1}U_{0}^{\prime }(x))=X_{1}^{\ast ,u}(x)%
\mathbf{1}_{\{R_{1}=u\}}+X_{1}^{\ast ,d}(x)\mathbf{1}_{\{R_{1}=d\}}
\]
and
\[
\pi ^{\ast }\left( x\right) =\frac{X_{1}^{\ast ,u}(x)-X_{1}^{\ast ,d}(x)}{u-d%
},
\]
where
\[
X_{1}^{\ast ,u}(x):=I_{1}\left( \frac{q}{p}U_{0}^{\prime }(x)\right) \quad \  \
\text{and }\quad X_{1}^{\ast ,d}(x):=I_{1}\left( \frac{1-q}{1-p}U_{0}^{\prime
}(x)\right) .  \label{X-1-values}
\]
\end{theorem}

\begin{proof}
The results follow directly from Theorem \ref{thm:ExistenceOfInvMarginal} and Theorem \ref{thm:U1}.
\end{proof}

%
%
%
%
%

Given an initial performance function $U_{0}$ and initial
wealth $X_{0}$, the following algorithm provides the predictable forward performance process
$\{U_1,U_2,\cdots\}$ along with the associated
optimal portfolio process $\{\pi^*_1,\pi^*_2,\cdots\}$ and the wealth process
$\{X^*_1,X^*_2,\cdots\}$ in the binomial market model.
\begin{itemize}
\item At $t=0:$ Assess the market parameters $\left(
u_1,d_1,p_{1}\right) $ for the first investment period, $\left[
0,t_{1}\right) .$ Compute
\end{itemize}

\[
q_{1}=\frac{1-d_1}{u_1-d_1},\quad a_{1}=\frac{%
q_{1}(1-p_{1})}{p_{1}(1-q_{1})},\quad b_{1}=\frac{1-q_{1}}{q_{1}},\quad
\text{and \ }c_{1}=\frac{1-p_{1}}{1-q_{1}},
\]
and
\[
\rho _{1}^{u}=\frac{q_{1}}{p_{1}}\text{\textbf{\  \  \  \  \ }and\textbf{\  \  \ }}%
\rho _{1}^{d}=\frac{1-q_{1}}{1-p_{1}}.
\]

Using $\left( a_{1},b_{1},c_{1}\right)$, check the conditions in part (i)
(resp. (ii)) of Theorem 8, and obtain the inverse marginal function $I_{1}$ from \eqref{eq:fnEq_ParSol1} (resp. %
\eqref{eq:fnEq_ParSol2}).
Then, apply Theorem \ref{thm:IMP} to compute
\[
U_{1}(x) =U_{0}(1)+p_{1}\int_{I_{1}(\rho _{1}^{u}U_{0}^{\prime
}(1))}^{x}I_{1}^{-1}(\xi )d\xi
+(1-p_{1})\int_{I_{1}(\rho _{1}^{d}U_{0}^{\prime
}(1))}^{x}I_{1}^{-1}(\xi )d\xi;\;\;x>0,
\]
\[
\pi^{\ast }_1=\frac{X_{1}^{\ast ,u}(X_0)-X_{1}^{\ast ,d}(X_0)}{u-d%
},  \label{pi-1}
\]
and
\[
X_{1}^{\ast}=X_{0}+\pi _{1}^{\ast
}\left( R_{1}-1\right),
\]
where
\[
X_{1}^{\ast ,u}(x)=I_{1}\left( \frac{q_1}{p_1}U_{0}^{\prime }(x)\right) \quad \  \
\text{and }\quad X_{1}^{\ast ,d}(x)=I_{1}\left( \frac{1-q_1}{1-p_1}U_{0}^{\prime
}(x)\right);\;\;x>0.
\]

\begin{itemize}
\item \noindent At $t=t_{n}$ ($n=1,2,\cdots$): We have already obtained $\{U_1,\cdots,U_n;I_1,\cdots,I_n\}$,
$\{\pi^*_1,\cdots,\pi^*_n\}$ and
$\{X^*_1,\cdots,X^*_n\}$.

Estimate the market parameters $%
(u_{n+1},d_{n+1},p_{n+1})$ for the upcoming investment period $\left[
t_{n},t_{n+1}\right) $. Let
\begin{align*}
	&q_{n+1}=\frac{1-d_{n+1}}{u_{n+1}-d_{n+1}},\quad a_{n+1}=\frac{%
	q_{n+1}(1-p_{n+1})}{p_{n+1}(1-q_{n+1})},\quad b_{n+1}=\frac{1-q_{n+1}}{%
	q_{n+1}},\\
	&c_{n+1}=\frac{1-p_{n+1}}{1-q_{n+1}}, \quad
	\rho _{n+1}^{u}=\frac{q_{n+1}}{p_{n+1}}\quad \text{ and }\quad \rho
	_{n+1}^{d}=\frac{1-q_{n+1}}{1-p_{n+1}}.	
\end{align*}
Check the conditions in part (i) (resp. (ii)) in Theorem \ref{thm:ExistenceOfInvMarginal}, using $\left(
a_{n+1},b_{n+1},c_{n+1}\right) $ (instead of $\left( a,b,c\right) $) and $%
I_{n}$ instead of $I_{0},$ and obtain $I_{n+1}$ from \eqref{eq:fnEq_ParSol1} (resp. %
\eqref{eq:fnEq_ParSol2}).\footnote{%
If both conditions in part (i) and (ii) do not hold, then the functional
equation \eqref{eq:fnEq} may not have a solution, or the solution may not be
unique. For the case of initial power utility $U_{0}(x)=\frac{x^{1-1/\theta }}{%
1-1/\theta }$, $\theta >0$, Example \ref{eq:CRRA_NonUnique} and Corollary %
\ref{eq:Unique} show that both condition fail at $t_{n}$ if and only if $%
\theta =-\log _{a}b>0$, in which case the solution exists but is not unique.
This case is pathological, but to solve it remains a technically interesting
question.}

Compute
\begin{equation}\label{U-n}
\begin{split}
U_{n+1}(x)& =U_{n}(1)+p_{n+1}\int_{I_{n+1}(\rho _{n+1}^{u}U_{n}^{\prime
}(1))}^{x}I_{n+1}^{-1}(\xi )d\xi \\
& \hphantom{=U_n(1)}+(1-p_{n+1})\int_{I_{n+1}(\rho _{n+1}^{d}U_{n}^{\prime
}(1))}^{x}I_{n+1}^{-1}(\xi )d\xi ;\quad x>0,
\end{split}
\end{equation}
\[
\pi _{n+1}^{\ast }=\frac{X_{n+1}^{\ast ,u}\left( X_{n}^{\ast }\right)
-X_{n+1}^{\ast ,d}\left( X_{n}^{\ast }\right) }{R_{n+1}^{u}-R_{n+1}^{d}},
\]
and
\[
X_{n+1}^{\ast }
=X_{n}^{\ast }+\pi _{n+1}^{\ast }\left( R_{n+1}-1\right) =X_{0}+\sum_{i=1}^{n+1}\pi _{i}^{\ast }\left( R_{i}-1\right),
\]
where,
\[
X_{n+1}^{\ast ,u}\left( x\right) =I_{n+1}\left( \frac{q_{n+1}}{p_{n+1}}%
U_{n}^{\prime }(x)\right) \text{ \  \  \  \ and \  \  \ }X_{n+1}^{\ast
,d}\left( x\right) =I_{n+1}\left( \frac{1-q_{n+1}}{1-p_{n+1}}U_{n}^{\prime
}(x)\right);\quad x>0.
\]
\end{itemize}

\bigskip

In summary, starting with an initial datum $U_{0},$ we have constructed for
(the end of) each trading period, say $\left( t_{n},t_{n+1}\right] ,$ $%
n=1,2,...,$ a performance criterion $U_{n+1}$ at $t_{n+1}$ that is indeed $%
\mathcal{F}_{n}-$measurable. This measurability is inherited by the same
measurability of the inverse marginal $I_{n+1}$ that enters in the lower
part of the integration in (\ref{U-n}).
Moreover, as expected, the optimal wealth $X_{n+1}^{\ast }$ is $\mathcal{F%
}_{n+1}-$measurable, given that the pricing kernel $\rho _{n+1}$ is  $%
\mathcal{F}_{n+1}-$measurable. The optimal portfolio $\pi _{n+1}^{\ast }$ is
$\mathcal{F}_{n}-$measurable, chosen at the beginning of the period $\left[t_{n},t_{n+1}\right).$


\section{Conclusions}

\label{sec:conclusion}

We have introduced a discrete time analogue of the continuous time forward performance processes, focusing on the predictability of such criteria.
%
%
Specifically, at the beginning of each evaluation period, the investor
assesses the market parameters only for this period (during which trading
may take place once or many times, in both discrete or continuous fashion). Then, she solves an
inverse single-period inverse investment model which yields the utility at the end
of the period, given the one at the beginning. The martingality and
supermartingality requirements of the forward performance process ensure that this construction, ``period-by-period
forward in time'' and adapted to the new market information, yields
time-consistent policies.

We have implemented this new approach in a binomial model with random, dynamically updated
parameters, including both the probabilities and the levels of the stock
returns.
We have then
discussed in detail how the construction of predictable forward performance
processes essentially reduces to a single-period inverse investment problem.
We have, in turn, shown that the latter is equivalent to solving a
functional equation involving the inverse marginal functions at the
beginning and the end of the trading period, and have established conditions for
the existence and uniqueness of solutions in the class of inverse marginal functions.

We have finally provided an explicit algorithm that yields the forward
performance process as well as their optimal portfolio and the associated optimal
wealth processes.

There are a number of possible future research directions. Firstly, one may depart from
the binomial model to study general discrete-time
models, while allowing for trading to be discrete or continuous. Such models are
inherently incomplete and additional difficulties are expected to arise with
regards to the derivation of the functional equation for the inverse
marginals as well as the existence and uniqueness of its solutions among suitable classes of functions.


A second direction is to enrich the predictable framework by incorporating model ambiguity.
This will allow for the specification of all possible market models only one
evaluation period ahead, thus offering substantial flexibility to narrow down the most
realistic models period-by-period as the market evolves.

From the theoretical point of view, an interesting question is to investigate
whether discrete predictable forward performance processes converge to their continuous-time
counterparts. While this is naturally and intuitively expected,
conditions on the appropriate convergence scaling need to be imposed, which
might be quite challenging due to the ill-posedness of the problem.
Such
results may also shed light to deeper questions on the construction of continuous-time
forward performance criteria related to the appropriate choice of their
volatility, finite-dimensional approximations, Markovian or path-dependent cases, among
others.

\appendix

\section{Proof of Theorem \protect \ref{thm:U1}}

\label{app:U1}

We start with the following auxiliary result, showing that the expected utility problem \eqref{eq:Merton} is
equivalent to
\begin{equation}
U_{0}\left( I_{0}(y)\right) =E_{\mathbb{P}}\left( U_{1}\left( I_{1}(\rho
_{1}\;y)\right) \right) ;\quad y>0.  \label{eq:Merton_Inv}
\end{equation}

\begin{lemma}
\label{lem:Aux} Suppose that $U_0,U_1\in \mathcal{U}$ and let $I_0$ and $I_1$
be respectively their inverse marginals. Then, \eqref{eq:Merton} holds if and only if %
\eqref{eq:Merton_Inv} holds.
\end{lemma}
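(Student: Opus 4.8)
The plan is to prove the two implications separately, using that the one-period binomial model is complete, so that the budget identity $E_{\mathbb{P}}[\rho_1 X]=x$ — which holds for every admissible $X=x+\pi(R-1)$ because $E_{\mathbb{P}}[\rho_1]=1$ and $E_{\mathbb{P}}[\rho_1(R-1)]=0$ — characterizes exactly which terminal wealths are attainable from initial wealth $x$.

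For the implication \eqref{eq:Merton}$\Rightarrow$\eqref{eq:Merton_Inv}, I would start from the fact, already established within the proof of Theorem \ref{thm:InvMerton_FunctionEQ}, that since $U_1\in\mathcal{U}$ is strictly concave and Inada the supremum in \eqref{eq:Merton} is attained at an interior maximizer $\pi^\ast(x)$, and that the corresponding optimal terminal wealth is $X^\ast(x)=x+\pi^\ast(x)(R-1)=I_1\!\left(\rho_1\,U_0'(x)\right)$. Substituting this into \eqref{eq:Merton} gives $U_0(x)=E_{\mathbb{P}}\big[U_1\!\left(I_1(\rho_1 U_0'(x))\right)\big]$, and then the change of variables $x=I_0(y)$ — a bijection of $(0,\infty)$ onto itself satisfying $U_0'(I_0(y))=y$, by the Inada conditions on $U_0$ — turns this into exactly \eqref{eq:Merton_Inv}.

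For the converse \eqref{eq:Merton_Inv}$\Rightarrow$\eqref{eq:Merton}, I would fix $x>0$, set $y=U_0'(x)$ (so $x=I_0(y)$), and take as candidate the wealth $\hat X:=I_1(\rho_1 y)$, i.e.\ $\hat X^u=I_1(\tfrac{q}{p}y)$, $\hat X^d=I_1(\tfrac{1-q}{1-p}y)$, together with $\hat\pi:=(\hat X^u-\hat X^d)/(u-d)$. The first step is admissibility: differentiating \eqref{eq:Merton_Inv} in $y$ and using the inverse-marginal identities $U_0'(I_0(\cdot))=\mathrm{id}$ and $U_1'(I_1(\cdot))=\mathrm{id}$ yields $I_0'(y)=E_{\mathbb{P}}[\rho_1^2 I_1'(\rho_1 y)]=\tfrac{d}{dy}E_{\mathbb{P}}[\rho_1 I_1(\rho_1 y)]$; comparing the two sides as $y\to\infty$, where both tend to $0$ because $\lim_{z\to\infty}I_1(z)=0$ (the expectation being a two-term sum), pins down the constant of integration and gives the budget identity $x=I_0(y)=E_{\mathbb{P}}[\rho_1 I_1(\rho_1 y)]=q\hat X^u+(1-q)\hat X^d$. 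This is precisely the wealth equation for $\hat\pi$, and since $I_1>0$ forces $\hat X^u,\hat X^d>0$ we obtain $\hat\pi\in\mathcal{A}(x)$. The second step is optimality: for any $\pi\in\mathcal{A}(x)$ the wealth $X=x+\pi(R-1)$ satisfies $E_{\mathbb{P}}[\rho_1 X]=x$, and concavity of $U_1$ with $U_1'(\hat X)=\rho_1 y$ gives $U_1(X)\le U_1(\hat X)+\rho_1 y\,(X-\hat X)$; taking $E_{\mathbb{P}}$ makes the linear term vanish, so $E_{\mathbb{P}}[U_1(X)]\le E_{\mathbb{P}}[U_1(\hat X)]$, with equality at $\hat\pi$. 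Finally \eqref{eq:Merton_Inv} identifies $E_{\mathbb{P}}[U_1(\hat X)]=U_0(I_0(y))=U_0(x)$, which is \eqref{eq:Merton}.

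I expect the only genuinely delicate point to be the recovery of the integrated budget identity $I_0(y)=E_{\mathbb{P}}[\rho_1 I_1(\rho_1 y)]$ from its differentiated form: this is exactly where the asymptotic behaviour at the fixed point $y=\infty$ (via the Inada / inverse-marginal limits) is needed to eliminate the constant of integration. The interiority of $\pi^\ast$, the complete-market budget constraint, and the concavity estimate are all routine and can be carried out with the two-point structure of $R$ made fully explicit.
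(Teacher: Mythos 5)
Your proof is correct, and while the forward implication is essentially the paper's (first-order conditions give the optimal wealth $I_1(\rho_1 U_0'(x))$, then substitute $x=I_0(y)$), your argument for the converse takes a genuinely different route. The paper introduces the value function $\tilde U(x)=\sup_{\mathcal{A}(x)}E_{\mathbb{P}}[U_1(X)]$, applies the already-proved forward implication to the pair $(\tilde U, U_1)$ to get $\tilde U(\tilde I(y))=U_0(I_0(y))$, differentiates to conclude $\tilde I'\equiv I_0'$, and uses the Inada limit $\tilde I(\infty)=I_0(\infty)=0$ to kill the integration constant, whence $\tilde U\equiv U_0$. You instead construct the candidate optimum $\hat X=I_1(\rho_1 y)$ explicitly, recover the budget identity $I_0(y)=E_{\mathbb{P}}[\rho_1 I_1(\rho_1 y)]$ by differentiating \eqref{eq:Merton_Inv} and again using the inverse-marginal limit at $y=\infty$ to fix the constant, and then verify optimality by the gradient inequality for the concave $U_1$ combined with $E_{\mathbb{P}}[\rho_1 X]=x$ for every admissible $X$. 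The two arguments use the same delicate ingredient (eliminating a constant of integration via the behaviour at the fixed point $y=\infty$), but they buy different things: the paper's version is shorter at the cost of invoking the classical single-period theory a second time — in particular the unproved assertion that $\tilde U\in\mathcal{U}$ so that it has an inverse marginal $\tilde I$ to which the forward direction applies — whereas your duality/verification argument is self-contained, makes the completeness of the one-period binomial market explicit, and as a by-product re-derives the functional equation \eqref{eq:fnEq} in its risk-neutral form directly from \eqref{eq:Merton_Inv}, tying the lemma back to Theorem \ref{thm:InvMerton_FunctionEQ}. All the steps you flag as routine (interiority of $\pi^\ast$, the budget constraint, differentiating the two-term expectation) are indeed routine here.
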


\begin{proof}
	We first show that \eqref{eq:Merton} implies \eqref{eq:Merton_Inv}. Indeed, standard results in expected utility maximization yield that \eqref{eq:Merton} implies
	\[
		U_0(x) = E_{\mathbb{P}}\Big[U_1\Big(I_1\big(\rho_1 U_0^\prime(x)\big)\Big)\Big];\quad x>0,
	\]
	and \eqref{eq:Merton_Inv} is then obtained by the change of variable $y=U_0^\prime(x)$.
	
	Next, we show that \eqref{eq:Merton_Inv} yields \eqref{eq:Merton}. Define the value function $\tilde{U}$ by
	\[
		\tilde{U}(x):=\sup_{\mathcal{A}(x)}E_{\mathbb{P}}\left[ U_{1}\left( X\right)\right];\quad x>0.
	\]
	We claim that $\tilde{U}\equiv U_0$. Let $\tilde{I}$ be the inverse marginal of $\tilde{U}$. By (i), one must then have
	\[
		\tilde{U}\big(\tilde{I}(y)\big) = E_{\mathbb{P}}\Big[U_1\big(I_1(\rho_1\; y)\big)\Big];\quad y>0,
	\]
	and it follows that $\tilde{U}\big(\tilde{I}(y)\big) = U_0\big(I_0(y)\big)$, for $y>0$.
	
	Differentiating with respect to $y$ yields $\tilde{I}^\prime \equiv I_0^\prime$. Therefore $\tilde{I}(y) = I_0(y) + C$, $y>0$, for some constant $C$. Taking the limit as $y\to\infty$ and using the Inada condition $\tilde{I}(\infty) = I_0(\infty) = 0$, we deduce that  $C=0$. Therefore, we obtain $\tilde{I}\equiv I_0$, which implies $\tilde{U}^\prime(x) = U_0^\prime(x)$, for all $x>0$. Finally, we obtain
	\[
		\tilde{U}(x)=E_{\mathbb{P}}\Big[ U_{1}\big( I_{1}(\rho \tilde{U}^{\prime }(x))\big) \Big] =E_{\mathbb{P}}\Big[
		U_{1}\big(I_{1}(\rho U_{0}^{\prime }(x))\big) \Big] =U_{0}(x);\quad x>0.\qedhere
	\]
\end{proof}

\begin{proof}[Proof of Theorem \ref{thm:U1}]
	(i): From \eqref{U-one} it follows that
\[
	U_{1}(x):= U_{0}(1) + p \int_{x_u(1)}^{x} I_{1}^{-1}(\xi )d\xi + (1-p) \int_{x_d(1)}^{x} I_{1}^{-1}(\xi )d\xi;\quad x>0,
\]
where $x_u(\cdot)$ and $x_d(\cdot)$ are given by
\begin{equation}\label{eq:XuXd}
	x_i(c) = I_1\big(\rho^i\;U_0^\prime(c)\big);\quad c>0,\; i=u,d.
\end{equation}
Thus,
\[
	U_{1}^\prime(x) = p\;I_{1}^{-1}(x) + (1-p)\;I_{1}^{-1}(x) = I_{1}^{-1}(x);\quad x>0.
\]
It then follows that $I_1$ is the inverse marginal of $U_1$ and that $U_1$ is a utility function.\\

\noindent (ii): Define the function $F$ by
	\begin{equation}\label{eq:F}
		F(x,c) := U_0(c) + p \int_{x_u(c)}^x I_{1}^{-1}(\xi) d\xi + (1-p) \int_{x_d(c)}^x I_{1}^{-1}(\xi) d\xi;\quad (x,c)\in \R^+\times \R^+,
	\end{equation}
	with $x_u(c)$ and $x_d(c)$ as in \eqref{eq:XuXd}.	We claim that
	\[
		\frac{\partial F}{\partial c}(x,c) = 0;\quad   x,c>0.
	\]
	Indeed, differentiating \eqref{eq:F} with respect to $c$ and then using that $I_1^{-1}\big(x_i(c)\big) = \rho^i\;U_0^\prime(c)$, for $c>0$, we have
	\[
	\begin{split}
		\frac{\partial F}{\partial c}(x,c) &= U_0^\prime(c) - p x_u^\prime(c) G\big(x_u(c)\big) - (1-p) x_d^\prime(c) G\big(x_d(c)\big)\\
		&= U_0^\prime(c) - p x_u^\prime(c) \rho^u\;U_0^\prime(c) - (1-p) x_d^\prime(c) \rho^d\;U_0^\prime(c)\\
		&= U_0^\prime(c) \Big ( 1 - p \rho^u x_u^\prime(c) - (1-p) \rho^d x_d^\prime(c)\Big ) = 0.
	\end{split}
	\]
	To obtain the last equation, note that equation \eqref{eq:fnEq} is equivalent to
	\[
		I_0(y) = p \rho^u I_1(y\; \rho_u) + (1-p) \rho^d I_1(y\; \rho_d);\quad y>0.
	\]
	Therefore, substituting $y=U_0(c)$ and differentiating with respect to $c$ yield
	\[
	\begin{split}
		1 &= \frac{d}{d c}\left(I_0^\prime \Big(U_0^\prime(c)\Big)\right) = \frac{d}{d c}\bigg(p \rho^u I_1\Big(\rho_u\;U_0^\prime(c)\Big) + (1-p) \rho^d I_1\Big(\rho_d\;U_0^\prime(c)\Big)\bigg)\\
		  &=  p (\rho^u)^2 I_1^\prime \Big(\rho_u\;U_0^\prime(c)\Big)U_0^{\prime \prime}(c) + p (\rho^d)^2 I_1^\prime \Big(\rho_d\;U_0^\prime(c)\Big)U_0^{\prime \prime}(c)\\
		  &= p \rho^u x_u^\prime(c) + (1-p) \rho^d x_d^\prime(c).
	\end{split}
	\]
	
	Note that, by definition, $U_1(x) = F(x,1)$. Since we have showed that $\frac{\partial F}{\partial c} \equiv 0$, we must have $U_1(x) = F(x,c)$, for all $x>0$ and $c>0$. In other words, for all $x,c\in \R^+$, $U_1$ satisfies
	\[
		U_1(x) = U_0(c) + p \int_{x_u(c)}^x I_{1}^{-1}(\xi) d\xi + (1-p) \int_{x_d(c)}^x I_{1}^{-1}(\xi) d\xi.
	\]
	On the other hand, as it was shown in (i), $U^\prime_1\equiv I_{1}^{-1}$. Therefore, for all $x>0$ and $c>0$,
	\[
		U_1(x) = U_0(c) + p \Big(U_1(x) - U_1\big(x_u(c)\big)\Big) + (1-p) \Big(U_1(x) - U_1\big(x_d(c)\big)\Big),
	\]
	which, in turn, yields that
	\[
		U_0(c) = p U_1\big(x_u(c)\big) + (1-p) U_1\big(x_d(c)\big) =  E_{\mathbb{P}}\Big[U_1\big(I_1(\rho_1\; U^\prime_0(c))\big)\Big];\quad   c>0.
	\]
	This is equivalent to \eqref{eq:Merton_Inv}. Hence, (ii) follows from Lemma \ref{lem:Aux}.
	
	\noindent (iii): This part follows easily from existing results in the classical expected utility problems, if we view (\ref{eq:Merton}) as a terminal expected utility problem with $U_{1}$ now given and $U_{0}$ being its value function.
\end{proof}

\section{Proof of Lemma \ref{prop:Uniqueness}}\label{app:lemma7}

Let $F_{1}$ and $F_{2}$ be two solutions of (\ref{eq:fnEq}) that both
satisfy either conditions given in the lemma. We show that their difference $%
w:=F_{1}-F_{2}\equiv 0$.

The function $w$ satisfies the homogenous equation $w(ay)=-bw(y),$ $y>0.$
Therefore, for $k=1,2,\dots$,
\begin{equation*}
w(y)=\frac{w(ay)}{-b}=\frac{w(a^{2}y)}{(-b)^{2}}=\dots =\frac{w(a^{k}y)}{(-b)^{k}},
\end{equation*}
and
\begin{equation*}
w(y)=-bw\left( \frac{y}{a}\right) =(-b)^{2}w\left( \frac{y}{a^{2}}\right)
=\dots =(-b)^{k}w\left( \frac{y}{a^{k}}\right) .
\end{equation*}%
It then follows that for $k=\pm 1,\pm 2,\dots$ and $y>0$,%
\[
\begin{split}
	|w(y)| &= b^{k}\left \vert w\left( \frac{y}{a^{k}}\right) \right \vert =y^{\log
	_{a}b}\left( \frac{y}{a^{k}}\right) ^{-\log _{a}b}\left \vert w\left( \frac{y%
	}{a^{k}}\right) \right \vert \\
	&\leq y^{\log _{a}b}\left( \frac{y}{a^{k}}\right) ^{-\log _{a}b}\left(\left \vert
	F_{1}\left( \frac{y}{a^{k}}\right) \right \vert +\left \vert \ F_{2}\left(
	\frac{y}{a^{k}}\right) \right \vert \right).		
\end{split}
\]
The right side vanishes as either $k \to  \infty $ or $k\to
-\infty $, and we conclude.

\section{Proof of Theorem \protect \ref{thm:ExistenceOfInvMarginal}}

\label{app:ExistenceOfInvMarginal} We only show part (i) and the
corresponding statements in parts (iii) and (iv), since (ii) follows from
similar arguments.

(i) Direct substitution shows that if the infinite series in %
\eqref{eq:fnEq_ParSol1} converges, then $I_1$ satisfies equation (\ref{eq:fnEq}).
Thus, to show (i), it only remains to establish that the series converges. Note that \eqref{eq:fnEq_ParSol1} can be
written, for $y>0,$ as
\begin{equation}
I_1\left( y\right) =\frac{b}{1+b}y^{\log _{a}b}\sum_{m=0}^{\infty
}(-1)^{m}\Psi _{0}(a^{m}\,y),  \label{eq:F_Psi0}
\end{equation}%
which, by the Leibniz test for alternating series, converges if $%
\lim_{m\rightarrow \infty }\Psi _{0}(a^{m}\,y)=0$ monotonically. The fact that $\lim_{m\rightarrow \infty }\Psi _{0}(a^{m}\,y)=0$ follows directly from
either of the conditions in (i) on $a$ and $\Psi _{0}$. To show that the
convergence is monotonic, note that \eqref{eq:Psi_0} yields
\begin{equation}
\Psi _{0}(a^{m+1}\,y)-\Psi _{0}(a^{m}\,y)=b^{-m-1}y^{-\log _{a}b}\Phi
_{0}(a^{m}\,y);\quad y>0,\; \;m=0,1,\dots.  \label{eq:Psi0_Phi0}
\end{equation}%
On the other hand, since $\Phi _{0}$ is increasing and $\lim_{y \to
\infty }\Phi _{0}(y)=\lim_{y \to  \infty }\big(I_0(a\,c\,y)-b\;I_0(c\,y)\big)=0$, by Inada's condition, we must have $\Phi _{0}(y)<0$, for $y>0$. Thus, by %
\eqref{eq:Psi0_Phi0}, we deduce that $\Psi _{0}(a^{m}\,y)>\Psi _{0}(a^{m+1}\,y)$ and $%
\lim_{m \to  \infty }\Psi _{0}(a^{m}\,y)=0$ monotonically.

(iii) First, we prove that $I_1$ is strictly decreasing. Indeed, %
\eqref{eq:F_Psi0} and \eqref{eq:Psi0_Phi0} yield
\begin{equation*}
I_1\left( y\right) = \frac{b}{1+b}y^{\log_{a}b}\sum_{m=0}^\infty \Big(\Psi
_{0}(a^{2 m}\,y) - \Psi _{0}(a^{2m+1}\,y)\Big) = -\frac{1}{1+b}%
\sum_{m=0}^\infty b^{-2 m} \Phi_0(a^{2m}\, y).
\end{equation*}
It then follows that, for $y<y^{\prime }$,
\begin{equation*}
I_1(y^\prime)-I_1(y) = \frac{1}{1+b}\sum_{m=0}^\infty b^{-2 m} \Big(%
\Phi_0(a^{2m}\, y) - \Phi_0(a^{2m}\, y^\prime) \Big) < 0,
\end{equation*}
where the inequality holds because $\Phi_0$ is strictly increasing.

Using equation \eqref{eq:fnEq}, that $a,b,c>0$, that $\lim_{y \to  \infty
}I_0(y) =0$, and the monotonicity of $I_1$, we deduce that $\lim_{y \to
\infty }I_1(y) =0,$ and, hence, $I_1(y) >0$, for $y>0.$ Similarly, the fact that $%
\lim_{y \to 0^+}I_0(y) =\infty$ yields that $\lim_{y \to 0^+}I_1(y) =\infty $%
. Thus, we have shown that $I_1\in \mathcal{I}$.

Finally, the conditions in Lemma \ref{prop:Uniqueness} follow from $\Psi_0(y)\to
0$, as either $y \to 0^+$ or $y \to  \infty$, and from the inequalities
\begin{equation*}
0< y^{\log_a b} I_1(y) = \frac{I_1(y)}{I_0(c\,y)} \Psi_0(y) < \frac{b+1}{b}
\Psi_0(y);\quad   y>0,
\end{equation*}
where we used \eqref{eq:fnEq} and that $I_1(y)>0$ to obtain
\begin{equation*}
\frac{I_1(y)}{I_0(c\,y)} = \frac{(1+b) I_1(y)}{I_1(a\,y)+b\;I_1(y)} < \frac{1+b}{b}.
\end{equation*}

(iv) Repeating the last part of the arguments in part (iii) for any solution $%
\tilde{I}>0$ yields that $\tilde{I}$ satisfies the same uniqueness
condition for \eqref{eq:fnEq} as $I_1$. The result then follows directly from Lemma %
\ref{prop:Uniqueness}.

\section{Proof of Corollary \protect \ref{eq:Unique}}

\label{app:Unique}

Assertion (ii) follows from (i) and Theorem \ref{thm:U1}.
Also, one
can easily check that $I_{1}$ given by \eqref{eq:CRRA_InvMarg_sol} is thus an
inverse marginal satisfying equation \eqref{eq:fnEq}.

It only remains to show the uniqueness of solutions that are inverse
marginals. To this end, it suffices to check that the conditions of Theorem %
\ref{thm:ExistenceOfInvMarginal} holds for all possible values of the
parameters. Setting $G(y)=y^{-\theta }$, $y>0,$ in \eqref{eq:Psi_0} yields
\begin{equation*}
\Phi _{0}(y)=(a^{-\theta }-b)c^{-\theta }y^{-\theta }\quad \text{ and }\quad
\Psi _{0}(y)=y^{-\left( \theta +\log _{a}b\right) }.
\end{equation*}%
Since $\theta \neq -\log _{a}b$ and $a\neq 1$, we have the following
dichotomy:

\begin{enumerate}
\item[a)] Either $\theta <-\log _{a}b$ and $a<1$ or $\theta >-\log _{a}b$
and $a>1$. Then, one can show that conditions (i) of Theorem \ref%
{thm:ExistenceOfInvMarginal} hold.

\item[b)] Either $\theta <-\log _{a}b$ and $a>1$ or $\theta >-\log _{a}b$
and $a<1$. Then, one can show that conditions (ii) of Theorem \ref%
{thm:ExistenceOfInvMarginal} hold.\qedhere
\end{enumerate}


\end{document}